\def\sqr#1#2{{\vcenter{\vbox{\hrule height.#2pt
              \hbox{\vrule width.#2pt height#1pt \kern#1pt \vrule width.#2pt}
              \hrule height.#2pt}}}}
\def\5n{\negthinspace \negthinspace \negthinspace \negthinspace \negthinspace }
\def\4n{\negthinspace \negthinspace \negthinspace \negthinspace }
\def\3n{\negthinspace \negthinspace \negthinspace }
\def\2n{\negthinspace \negthinspace }
\def\1n{\negthinspace }
\def\dbE{\mathbb{E}}
\def\dbF{\mathbb{F}}
\def\dbP{\mathbb{P}}
\def\dbR{\mathbb{R}}
\def\cD{{\cal D}}
\def\cF{{\cal F}}
\def\cU{{\cal U}}
\def\ds{\displaystyle}
\def\ns{\noalign{\ss}}
\def\ss{\smallskip}
\def\ms{\medskip}
\def\bs{\bigskip}
\def\q{\quad}
\def\qq{\qquad}
\def\hb{\hbox}
\def\({\Big (}
\def\){\Big )}
\def\[{\Big[}
\def\]{\Big]}
\def\rf{\eqref}
\def\a{\alpha}
\def\g{\gamma}
\def\e{\varepsilon}
\def\l{\lambda}
\def\si{\sigma}
\def\t{\tau}
\def\f{\varphi}
\def\i{\infty}
\def\G{\Gamma}
\def\D{\Delta}
\def\Th{\Theta}
\def\O{\Omega}
\def\limsup{\mathop{\overline{\rm lim}}}
\def\h{\widehat}
\def\ti{\tilde}
\def\cd{\cdot}
\def\cds{\cdots}
\def\deq{\triangleq}
\def\les{\leqslant}
\def\ges{\geqslant}
\def\rf{\eqref}
\def\bde{\begin{definition}\label}
\def\ede{\end{definition}}
\def\bel{\begin{equation}\label}
\def\bt{\begin{theorem}\label}
\def\et{\end{theorem}}
\def\bc{\begin{corollary}\label}
\def\ec{\end{corollary}}
\def\bl{\begin{lemma}\label}
\def\el{\end{lemma}}
\def\bp{\begin{proposition}\label}
\def\ep{\end{proposition}}
\def\br{\begin{remark}\label}
\def\er{\end{remark}}
\def\bex{\begin{example}\label}
\def\ex{\end{example}}
\def\ben{\begin{enumerate}}
\def\een{\end{enumerate}}
\def\square#1{\vbox{\hrule\hbox{\vrule height#1%
     \kern#1\vrule}\hrule}}
\def\rectangle#1#2{\vbox{\hrule\hbox{\vrule height#1%
     \kern#2\vrule}\hrule}}
\def\qed{\hfill \vrule height7pt width3pt depth0pt}
\font\tenbb=msbm10 \font\sevenbb=msbm7 \font\fivebb=msbm5
           \renewcommand{\bibsection}{
           \begin{center}

\section*{\refname\@mkboth{\MakeUppercase{\refname}}
               {\MakeUppercase{\refname}}}
           \end{center}
           }
\newtheorem{proposition}{Proposition}
\newtheorem{definition}{Definition}
\newtheorem{theorem}{Theorem}
\newtheorem{corollary}{Corollary}
\newtheorem{lemma}{Lemma}
\newtheorem{remark}{Remark}
\newtheorem{example}{Example}
\newtheorem{assumption}{Assumption}
\newcommand{\ba}{\begin{array}}
\newcommand{\ea}{\end{array}}
\newcommand{\be}{\begin{equation}}
\newcommand{\ee}{\end{equation}}
\newcommand{\bee}{\begin{equation*}}
\newcommand{\eee}{\end{equation*}}
\newcommand{\bea}{\begin{eqnarray}}
\newcommand{\eea}{\end{eqnarray}}
\newcommand{\beaa}{\begin{eqnarray*}}
\newcommand{\eeaa}{\end{eqnarray*}}
\def\dbE{\mathbb{E}}
\def\dbF{\mathbb{F}}
\def\dbP{\mathbb{P}}
\def\dbR{\mathbb{R}}
\def\a{\alpha}
\def\g{\gamma}
\def\e{\varepsilon}
\def\l{\lambda}
\def\si{\sigma}
\def\t{\tau}
\def\f{\varphi}
\def\h{\widehat}
\def\G{\Gamma}
\def\D{\Delta}
\def\Th{\Theta}
\def\O{\Omega}
\def\cD{{\cal D}}
\def\cF{{\cal F}}
\def\cU{{\cal U}}
\def\ss{\smallskip}
\def\ms{\medskip}
\def\bs{\bigskip}
\def\q{\quad}
\def\qq{\qquad}
\def\hb{\hbox}
\def\cd{\cdot}
\def\cds{\cdots}
\newcommand{\basa}{\begin{assumption}}
\newcommand{\easa}{\end{assumption}}
\newcommand{\bas}{\begin{assum}}
\newcommand{\eas}{\end{assum}}
\def\limsup{\mathop{\overline{\rm lim}}}
\def\h{\widehat}
 \def\cd{\cdot}
\def\cds{\cdots}
\def\deq{\mathop{\buildrel\D\over=}}
\def\1{{\bf 1}}
\def\:{\!:\!}
\def\reff#1{{\rm(\ref{#1})}}
\newtheorem{thm}{Theorem}
\newtheorem{lem}{Lemma}
\newtheorem{prop}{Proposition}
\newtheorem{defn}{Definition}
\newtheorem{assum}{Assumption}
\begin{document}
\title{\bf Present-Biased Lobbyists in Linear Quadratic Stochastic Differential Games\footnote{We thank Dongliang Lu for outstanding research assistance.}}

\author{Ali Lazrak\footnote{Sauder School of
Business, University of British Columbia, 2053 Main Mall, Vancouver, BC V6T 1Z2, Canada; (Email: {tt ali.lazrak@sauder.ubc.ca}. This author is supported by SSHRC.},~~~~Hanxiao Wang\footnote{ College of Mathematics and Statistics, Shenzhen University, Shenzhen 518060, China
(Email: {\tt hxwang@szu.edu.cn}). This author is supported in part by NSFC Grant 12201424 and
Guangdong Basic and Applied Basic Research Foundation 2023A1515012104.},~~~~Jiongmin Yong\footnote{Department of
Mathematics, University of Central Florida, Orlando, FL 32816, USA (Email: {\tt jiongmin.yong@ucf.edu}).
This author was supported in part by NSF Grant DMS-2305475.}}

\date{}
\maketitle

\bs

\centerline{({\it In the Memory of Professor Tomas Bj\"ork})}

\bs

\begin{abstract}
We investigate a linear quadratic stochastic zero-sum game where two players lobby a political representative to invest in a wind turbine farm. Players are time-inconsistent because they discount performance with a non-constant rate. Our objective is to identify a consistent planning equilibrium in which the players are aware of their inconsistency and cannot commit to a lobbying policy. We analyze the equilibrium behavior in both single player and two-player cases, and compare the behavior of the game under constant and non-constant discount rates. The equilibrium behavior is provided in closed-loop form, either analytically or via numerical approximation. Our numerical analysis of the equilibrium reveals that strategic behavior leads to more intense lobbying without resulting in overshooting.

\end{abstract}

\bf Keywords. \rm Time inconsistency, lobbying, two players zero sum dynamic game, linear quadratic stochastic differential game

\ms

\bf AMS 2020 Mathematics Subject Classification. \rm 91A15,~91B14

\section{Introduction}

Time inconsistency refers to a phenomenon in which a decision maker's preferences for different alternatives change over time, even in the absence of new information. This can pose a significant challenge in solving dynamic optimal choice problems, as the optimal solution may vary depending on the moment in time from which the decision is being made. This misalignment can create a gap between the optimal policies intended at one point in time and the policies that are implemented at a later time. Standard neoclassical models in macroeconomics and finance assume that decision makers have time-additive preferences that obey the independence axiom and discount utility exponentially.  In that context, dynamic programming methods can help break down any dynamic optimization problem into a sequence of simpler, static problems that can be solved recursively. However, economists are acutely aware that time inconsistency pervades various contexts, rendering the principle of dynamic programming invalid. Time inconsistency can manifest itself in common economic interactions, even if the players' preferences are standard. For instance, in dynamic games, the time inconsistency problem is a common issue, and the players' ability to commit can significantly affect the equilibrium outcome. The game between central banks and the private sector in \cite{kydland1977rules} is an example of this problem. Similarly, in dynamic collective decisions, \cite{jackson2015collective} have shown that time inconsistency is prevalent for almost any collective decision rule, whether it is vote-based or based on some utilitarian aggregation rule.\footnote{\cite{calvo1988optimal} and \cite{bernheim1989intergenerational} have also highlighted the problem of planners' time inconsistency in intergenerational models.} Because it emerges from the interaction between players, the time inconsistency problem can provide new insights and allow economists to recommend welfare-improving policies for society (\cite{kydland1977rules}). Overall, the issue of time inconsistency is central among economists and has significant policy implications.

Time inconsistency can also manifest in single individual decisions in cases where decision makers deviate from the standard assumption of exponential discounting. When a non-exponential discount function is used, the marginal rate of substitution between consumption at two different future dates varies then as time passes, unlike in the case of a constant discount rate. Discount rates can vary according to a prominent psychological theory proposed by \cite{ainslie1975specious}. The theory suggests that hyperbolic discounting generates a bias for the present and can explain the impulsive behavior explored in \cite{ainslie1975specious}.  Hyperbolic discounting refers to a phenomenon where people tend to discount the value of future rewards more heavily when they are further away in time, but less so as the reward becomes more immediate. 


This paper investigates the impact of a non-constant discount rate on continuous-time linear quadratic optimization problem. We assume that the decision-maker is aware of the time inconsistency problem and approaches the problem with an intrapersonal game view while lacking any commitment ability. This is the consistent planning solution to the game envisioned by  \cite{strotz1956myopia}. To contrast the behavior of a decision-maker with a constant discount rate versus one with a non-constant discount rate, we consider two variations of the same problem. In the first variation, we examine the behavior of a single decision-maker. In the second variation, we analyze a zero-sum game between two players. To gain insights into the impact of time inconsistency on decision-making in these two variations, we aim to provide closed-form solutions or numerical approximations. This will enable us to offer concrete statements about decision-making behavior and contrast it with the behavior under constant rate discounting. It is crucial to underline that our examples and solution methods are exclusively applicable to zero-sum games.

The framework of this paper deliberately maintains simplicity, serving as an entry point to relevant literature and offering insights into how changing discount rates affect decision-making behavior.
Our goal is to present closed-form solutions wherever possible and, when not feasible, to provide numerical approximation methods. By prioritizing simplicity and accessibility, we aim to engage a wider audience in this crucial area of research. Notably, Section~\ref{section: zero sum with non constant discount} presents new findings on the zero-sum game with players exhibiting non-constant discount rates, which have not been explored to the best of our knowledge.

{\it Discussion of the literature.} The topic of time inconsistency has been extensively studied in the literature, and it is not feasible to provide a comprehensive summary in this paper. Therefore, we do not aim to cite every significant paper in this field.\footnote{The book by \cite{bjork2021time} and the recent paper by \cite{hernandez2023myself} offer comprehensive overviews of the literature on time inconsistency in mathematical finance.} The seminal paper by \cite{strotz1956myopia} was the first to formalize non-exponential discounting within a dynamic utility maximization framework. In a simple deterministic ``cake eating" problem in continuous time, Strotz formulated the solution of the problem under commitment, representing the optimal solution based on preferences at the beginning of time. He then went on by solving the consistent planning problem that he defined as:

 ``Since precommitment is not always a feasible solution to the problem of intertemporal conflict, the man with insight into his future unreliability may adopt a different strategy and reject any plan which he will not follow through. His problem is then to find the best plan among those that he will actually  follow." (\cite{strotz1956myopia} page 173).

The consistent planning solution, also known as the sophisticated solution, provides a useful framework for solving consumption and saving decisions over time when discount rates are non-constant. In a finite horizon setting with discrete time, the consistent planning solution can be identified by using backward induction. Starting from the last period, the decision maker maximizes their utility by selecting a sustainable consumption plan that future selves would also choose. This is done at each step, working backwards to the present time. In each step of the consistent planning solution, the lifetime utility is maximised under a consumption sustainability constraint in addition to the standard budget constraint, making the problem non-standard.  The sustainability constraints ensures that the needs of future selves are taken into account. Although this method may generate multiple solutions due to the potential non-concavity of intermediate values, the existence of a solution is typically guaranteed.

When the time horizon is infinite, there is no terminal time to start the backward induction and the consistent planning solution can present some mathematical difficulties in discrete time. Nonetheless, researchers have made significant progress in addressing these difficulties, as seen in \cite{krusell2003consumption}, among others. In the original continuous framework proposed by Strotz, the mathematical formulation of consistent planning was initially established in the context of a deterministic consumption-saving problem by \cite{ekeland2006noncommitment} (see also \cite{ekeland2010golden}). The approach involves assuming that the decision maker has control over the consumption of their immediate successors at any given point in time, which enables the formation of a small coalition that isolates the current decision maker from the more distant ones. For a closed-loop consumption strategy that depends on the current value of capital stock to qualify as an equilibrium strategy, it must be the optimal policy for the current planner when the coalition is infinitesimally small. Furthermore, the same strategy must be expected to be employed by the distant planners.

This way of defining the equilibrium is commonly referred to as the ``spike variation method". Interestingly, this method is more general than its original framework proposed by Ekeland and Lazrak (2006). The equilibrium is well-posed even when a Brownian noise is introduced in different applications. Early research on this topic includes Ekeland and Pirvu (2008), Bj\"ork and Murgoci (2010), Yong (2012), Hu et al. (2012), and Bj\"ork et al. (2017). This paper presents an application of the spike variation method to a two-player, zero-sum stochastic linear quadratic differential game. Our approach represents a step forward in the understanding of strategic interactions between players when each player has a preference based time inconsistency. It is worth noting that recent progress has been made in principal-agent models concerning the resolution of problems involving non-constant discounting, as demonstrated in \cite{cetemen2023renegotiation} and \cite{hernandez2023time}. Additionally, in the context of optimal control with multidimensional states, and the control appears in the diffusion, \cite{lei2021nonlocality} have established the time-local well-posedness of the equilibrium HJB equation when dealing with time-inconsistent single player situation. Given the focus of our study, we concentrate on specific examples to illustrate the impact of time inconsistency in zero sum differential games. This approach allows us to define an equilibrium that is global in the time dimension, enabling us to derive valuable economic insights from our findings.




 \section{The model}

Let $(\O,\cF,\dbF,\dbP)$ be a complete filtered probability space on which a one-dimensional standard Brownian motion $W(\cd)$ is defined, whose natural filtration, augmented by all the $\dbP$-null set in $\cF$, is denoted by $\dbF=\{\cF_t\}_{t\ges0}$. Let $T>0$ be a fixed time horizon and define the set
$$\D^*[0,T]=\{(t,s)\in[0,T]^2\bigm|0\les t\les s\les T\}.$$
{\bf Objectives and policies.} We consider a dynamic game where a state is controlled by two players.  The state is described by a one-dimensional controlled linear stochastic differential equation given by
\bel{SDE2}\left\{\2n\ba{ll}
\ns\ds dX(s)=\big[ u_1(s)+ u_2(s)\big]ds+\si X(s) dW(s),\qq s\in[t,T],\\
\ns\ds X(t)=\xi,\ea\right.\ee
where $X(\cd)$ is the state process, and $u_i(\cd)$ is the control taken by Player $i$, $i=1,2$. We let
$$\cD=\{(t,\xi)\bigm|t\in[0,T),~\xi\in L^2_{\cF_t}(\O)\},$$
with
$$L^2_{\cF_t}(\Omega)=\big\{\xi:\Omega\to\dbR~| ~\xi \hbox{ is  $\cF_{t}$-measurable, } \dbE[|\xi|^2]<\i\big\}.$$
Any $(t,\xi)\in\cD$ is called an {\it initial pair}. The coefficient $\si $ is a deterministic scalar.
For $i=1,2$ and $t\in[0,T)$, the set of {\it admissible (open-loop) controls}\index{Admissible control}
of Player $i$ on $[t,T]$ is defined by
$$L^2_\dbF(t,T)=\Big\{\f:[t,T]\times\O\to\dbR\bigm|\hb{$\f(\cd)$ is $\dbF$-progressively measurable, and~}\dbE\int_t^T|\f(s)|^2ds<\i\Big\}.$$
To measure the performance  for player $1$, we introduce the functional
\bel{LQG1:cost}\ba{ll}
\ds J_1(t,\xi;u_1(\cd),u_2(\cd))=
\dbE_t\Big\{\a(T-t)  X(T)^2+\2n\int_t^T\3n\a(s-t)\(  -u_1(s)^2 +R u_2(s)^2\)ds\Big\},\ea\ee
%
where $0< R \les 1$ is a
deterministic scalars and $\a(\cd)$ is the discount function. Similarly, the performance  for player $2$ is given by
\bel{LQG2:cost}\ba{ll}
\ds J_2(t,\xi;u_1(\cd),u_2(\cd))=
\dbE_t\Big\{- \a(T-t)  X(T)^2+\2n\int_t^T\3n\a(s-t)\(  u_1(s)^2 -R u_2(s)^2\)ds\Big\}.\ea\ee
As we explain below, the condition $0<R\les 1$ means that $u_1(\cd)$ costs no less than $u_2(\cd)$ in the cost functional(s).  When $R>1$, existence of the equilibria is not guaranteed and since our objective is to maintain tractability, the assumption $0<R\leq 1$ will be maintained from now on.

 Starting from the state $\xi$,  player $i$ for $i=1,2$, selects her control $u_i(\cd)$ from the set $L^2[t,T]$ to maximize the performance $J_i(t,\xi;u_1(\cd),u_2(\cd))$. Since,
\bel{J+J=0}J_1(t,\xi;u_1(\cd),u_2(\cd))+J_2 (t,\xi;u_1(\cd),u_2(\cd))=0,\ee
our game is called a two-person zero-sum stochastic linear quadratic game. The zero sum feature capture the conflict of interest of the two players due to the presence of externalities. To discuss the implication of the model, it is useful to keep in mind a specific application of this model. Suppose that the quantity $X(T)$ represents the output of a wind turbine farm. The experienced utility of the two players is quadratic in output. The first player represents the population who value the environmental externality due to green production of electricity and experience a positive utility that is quadratic in output. The second player represents the population who experience a disutility that is also quadratic in output. The aversion to wind turbines of the second player captures opposition to wind turbines due to noise inconvenience for the citizen who live near the wind turbine farms. We assume that the level of production of the wind turbine farm is a collective decision that is taken through a political process that is mediated by political representatives. Each player, can spend some effort lobbying the politician in order to have some influence on the output decision of the turbine farm. The lobbying can take the form of campaign donations or can be more direct by sending letters to political representatives or engage in protest campaigns. The lobbying effort is captured by the control. The state equation (\ref{SDE2}) shows that positive control $u$ increases output and a negative control decreases the level of output. Player $1$ (resp. $2$) values (dislike) output and would apply $u_1= + \infty$ (resp. $u_2 = -\infty$) in the absence of constraint. However, lobbying generate a disutility that is quadratic for both players. The performance criteria \reff{LQG1:cost} shows that Player $1$ suffers from disutility from lobbying effort captured by the term $-u_1(s)^2$. The performance \reff{LQG2:cost} shows that Player $2$ suffers from disutility from lobbying effort captured by the term $-Ru_2(s)^2$ with $0<R\les1$. Therefore, for the same level of lobbying, the disutility is weakly larger for Player $1$ who supports wind turbines. This asymmetry capture the ideas that opposing an environmental reform requires less lobbying efforts because the turbines represent new technologies that will require more subsidy relative to the status quo technology for producing electricity.
Finally,  in addition to the opposite perception that the players have on wind turbine farms there is an externality in effort as well. The term $R u_2(s)^2$ in the functional  \reff{LQG1:cost} shows that more lobbying efforts from player $2$ impacts positively the performance of Player $1$. Similarly, the term $u_1(s)^2$ in the performance  \reff{LQG1:cost} shows the performance of player $2$ is impacted positively when Player $1$ exerts more lobbying efforts. This externality in lobbying efforts is tantamount to an assumption of limited supply of political capital. When player $2$ exerts more lobbying efforts against the wind turbines, that player depletes her political capital. Player $1$ is positively impacted by that depletion because Player $2$ may be less willing to lobby for other un-modelled political issues that oppose  the two players and as a result, Player $1$ may have a better outcome on those issues.

\ms

{\bf Opend-loop versus closed-loop saddle point controls}

Due to the equality \rf{J+J=0}, the performances $J_1$ and $J_2$ are not independent. If we set
$$J(t,\xi;u_1(\cd).u_2(\cd))=J_1(t,\xi;u_1(\cd),u_2(\cd)),$$
then Player $1$ wants to maximize $J(t,\xi;u_1(\cd),u_2(\cd))$ by selecting a $u_1(\cd)\in L^2_\dbF(t,T)$ and Player $2$ wants to minimize it by selecting a $u_2(\cd)\in L^2_\dbF(t,T)$. Thus, we obtain a two-person zero-sum differential game described by the state equation \rf{SDE2} and payoff/cost functional $J=J_1$ given by \rf{LQG1:cost}. For convenience, let us name it Problem (G). Consequently, in Problem (G), Player $1$ is the maximizer and Player $2$ is the minimizer. We now introduce the following definition.

\begin{defn}[\bf Open-loop saddle point] \rm  The control pair $(u^*_1(\cd), u^*_2(\cd))\in L^2_\dbF(t,T)\times L^2_\dbF(t,T)$
is called an {\it open-loop saddle point} of Problem (G) at the initial pair $(t,\xi)$, if the following holds:
\bel{definition-saddle-points}\ba{ll}
\ns\ds J(t,\xi;u_1(\cd),u_2^*(\cd))\les J(t,\xi;u^*_1(\cd),u^*_2(\cd))\les J(t,\xi;u_1^*(\cd),u_2(\cd)),\qq\forall u_1(\cd),u_2(\cd)\in L^2(t,T).\ea\ee

\end{defn}

The open-loop saddle point, if it exists, is an optimal choice for both players because if Player $i$ chooses a control different from $u^*_i(\cd)$, then her performance index would become no better or worse.  Sun and Yong (see \cite{sun2020stochastic}) showed that $(u_1^*(\cd),u_2^*(\cd))$ is an open-loop saddle point of Problem (G) on $[t,T]$ with $X^*(\cd)$ being the corresponding state process if and only if, together with another pair of adapted process $(Y^*(\cd),Z^*(\cd))$, the so-called optimality system is satisfied (with a stationary condition). This system is a system of forward-backward stochastic differential equations (FBSDEs, for short), which is actually a two-point boundary value problem of SDEs. Thus, the whole information generated by  $(X^*(s),u^*_1(s),u_2^*(s))$, $s\in[t,T]$ is needed to determine $(Y^*(\cd),Z^*(\cd))$. Whereas, the open-loop saddle point $(u_1^*(\cd),u_2^*(\cd))$ can be written in terms of $(Y^*(\cd),Z^*(\cd))$ via the stationary condition. Hence, in seeking open-loop saddle point, the information of the state, as well as the opponent's control over the whole time interval $[t,T]$ have to be used (including the initial state $\xi$). In a game situation, the future information of the state and both controls are not available at the present time to players. Therefore, open-loop saddle point only has a functional analysis meaning and it is not practically feasible. In this paper, we are primarily interested in  the so-called closed-loop saddle strategies which take the form of state-contingent affine function of the state  and is non-anticipating, meaning that future information of the state and the both controls are not needed.

We now begin with the definition of closed-loop strategies at a given time $t$. Recall that $L^2_\dbF(t,T)$ is the set of all adapted square integrable process. We also let $L^\i(t,T)$ be the set of all bounded and deterministic functions. A {\it closed-loop strategy} for both players are two couples of processes $(\Th_i(\cd),v_i(\cd))\in L^\i(t,T)\times L_\dbF^2(t,T)$, for $i=1,2$.
Under such a closed-loop strategy, the state equation reads
\bel{LQG:cl-syst}\left\{\ba{ll}
\ns\ds dX(s)=\[\(\Th_1(s)+\Th_2(s)\)X(s)+v_1(s)+v_2(s)\]ds+\si X(s) dW(s), \qq s\in[t,T],\\
\ns\ds X(t)=\xi.\ea\right.\ee
We call \reff{LQG:cl-syst} the closed-loop system under $(\Th_1(\cd),v_1(\cd);\Th_2(\cd),v_2(\cd))$
and the corresponding  performance functional reads
\bel{LQG:cost3}\ba{ll}
\ds  J(t,\xi;\Th_1(\cd)X(\cd)+v_1(\cd),\Th_2(\cd)X(\cd)+v_2(\cd))=
\dbE\Big\{\a(T-t)X(T)^2\\
\ns\ds\qq+\2n\int_t^T\3n\a(s-t)\[ -  \big (\Th_1(s)X(s)+v_1(s)\big)^2+R  \big (\Th_2(s)X(s)+v_2(s)\big)^2\]ds\Big\}.\ea\ee
To emphasize that the solution $X(\cd)$ to \rf{LQG:cl-syst} depends on $(\Th_i(\cd),v_i(\cd))$ ($i=1,2$),
as well as on the initial pair $(t,\xi)$, we frequently write
$$X(\cd)=X(\cd\,;t,\xi,\Th_1(\cd),v_1(\cd),\Th_2(\cd),v_2(\cd)).$$
The control pair $(u_1(\cd),u_2(\cd))$ defined by
\bel{outcome}u_1(\cd)=\Th_1(\cd)X(\cd)+v_1(\cd), \q u_2(\cd)=\Th_2(\cd)X(\cd)+v_2(\cd)\ee
is called the {\it outcome} of the closed-loop strategy $(\Th_1(\cd),v_1(\cd);\Th_2(\cd),v_2(\cd))$.

\ms

With the above, we are now ready to introduce the following notion.

\begin{defn}[\bf Closed-loop saddle strategies solving Problem (G)]
\rm For any initial time $ t\in[0,T)$, a closed-loop saddle strategy for Problem (G) consists of a 4-tuple $(\Th^*_1(\cd), v^*_1(\cd);\Th^*_2(\cd), v^*_2(\cd))\in
L^\i(t,T)\times L^2_\dbF(t,T)\times L^\i(t,T)\times L^2_\dbF(t,T)$ such that
for any $\xi\in L^2_{\cF_t}(\O)$, the following inequalities hold:
\bel{closed-N}\ba{ll}
\ns\ds J(t,\xi;\Th_1(\cd)X(\cd)+v_1(\cd),\Th^*_2(\cd) X(\cd)+v^*_2(\cd))\les J(t,\xi;\Th^*_1(\cd) X^*(\cd)+v^*_1(\cd),\Th^*_2(\cd) X^*(\cd)+v^*_2(\cd))\\
\ns\ds\les J(t,\xi;\Th^*_1(\cd) X(\cd)+v^*_1(\cd),\Th_2(\cd) X(\cd)+ v_2(\cd)) ,\q\forall\Th_i(\cd)\in L^\infty(t,T),~v_i(\cd)\in L^2_\dbF(t,T),~i=1,2.\ea\ee
\end{defn}

One should note that on the most left-hand sides of \rf{closed-N},
$X(\cd)=X(\cd\,;t,\xi,\Th_1(\cd),v_1(\cd),\Th^*_2(\cd),v^*_2(\cd))$,
whereas, on the most right-hand side of \rf{closed-N} $X(\cd)=X(\cd\,;t,\xi,\Th^*_1(\cd),v^*_1(\cd),\Th_2(\cd),v_2(\cd))$,
and in the middle of \rf{closed-N} $X^*(\cd)=X(\cd\,;t,\xi,\Th^*_1(\cd),v^*_1(\cd),\Th^*_2(\cd),v^*_2(\cd))$.
Thus, the processes $X(\cd)$ appearing in the most left-hand side and the most right-hand side in \rf{closed-N} are different in general.

\ms

From the above, we see that the closed-loop saddle strategy
is determined prior to the state process. In other words, the state process is determined by the state equation under the outcome \rf{outcome} of the closed-loop saddle strategy. Thus, the closed-loop saddle strategy works for arbitrary initial state $\xi$.
It follows from \rf{outcome} that the outcome, as the input of the state equation, is non-anticipating --- no future information is used. Hence, it is practically feasible. More precisely, one could calculate $(\Th_i^*(\cd),v_i^*(\cd))$ off-line first, then apply it via the outcome \rf{outcome} (the current state feedback) in the state equation.  It is crucial to underscore that this approach stands in stark contrast to the open-loop strategy (see \cite{sun2014linear}).

In this paper, we will concentrate on closed-loop strategies and their outcomes for both the single agent problem and the game problem. Note that our state equation is homogeneous and the performance index does not contain linear terms, similar to \cite{sun2020stochastic}, the control $v_1$ and $v_2$ will be zero. Thus, to simplify the notation, we will omit the $v$ component of closed-loop strategies in our subsequent analysis.

{\bf Discounting} In this paper, we consider first the standard exponential discount function $\alpha(t) = e^{\textcolor{green}{-}\rho t}$ where $\rho > 0 $ is the constant discount rate defined by $\rho \equiv - \a'(t) / \a(t)$.

To capture the present bias, we consider a mixture of exponential discount function of the form
\bel{eq: Discount function}
 \a(t) = \lambda e^{-\rho t} + (1 -\lambda) e^{-\gamma t}
\mbox{ for some } \gamma > \rho >0 \mbox{ and } \lambda \in ( 0,1).
\ee

The implied discount rate is given by
\bel{eq: Discount rate}
\ds  -\frac{\a'(t)}{\a(t)} =\rho +  \frac{ (1 -\lambda)}{ \lambda e^{(\gamma - \rho) t} +1 -\lambda} (\gamma - \rho)
\ee
and is monotonically declining from the short term discount rate $-\frac{\a'(0)}{\a(0)} = \rho+(1-\lambda)(\gamma - \rho)$ to to the long term discount rate $-\frac{\a'(\infty)}{\a(\infty)} = \rho$. Thus the discount function \reff{eq: Discount function} embodies the present bias highlighted in \cite{ainslie1975specious}. The particular form of discount function (\ref{eq: Discount function}) appears naturally for social planners in overlapping generation models (see e.g. \cite{blanchard1985debt} and \cite{calvo1988optimal}). The discount function (\ref{eq: Discount function}) was also used as a deterministic benchmark in \cite{harris2013instantaneous}.\footnote{In Section II.B of  \cite{harris2013instantaneous}, utilities at all future periods are discounted exponentially with discount factor $0 < \rho < 1$. However, distant future periods are additionally discounted with uniform weight $0 < \lambda \leq 1$. As a result, the immediate future receives full weight $ e^{ - \rho t}$, while more distant future periods are given the lower weight $\lambda e^{- \rho t}$. The immediate future lasts during a length of time that is stochastic and exponentially distributed with an intensity $\gamma - \rho$. Under these assumptions, the expected discount function matches exactly the discount function  (\ref{eq: Discount function}). }.

\section{Single Player}

In this section, we will consider the game problem with the single player. In the first part, we provide the optimal control solution when the payer has a constant discount rate. In the second part, we provide the consistent planning solution when the player has non constant discount. We will focus on the behaviour of Player $2$ because we can identify the closed loop consistent planning strategies for that players both with constant and non-constant discount rate.\footnote{The problem of player $1$ with constant discount has a solution with proper restrictions on the parameters. With non-constant discounting, to our knowledge, finding closed-loop consistent plan is an open question.}

\subsection{Single player with exponential discount}

We consider the state equation \rf{SDE2} and the objective \rf{LQG2:cost} with $u_1=0$ and $\a(t) = e^{-\rho t}$ for some $\rho>0$.
In other words, player 2 maximizes the following functional
\bel{LQC2:cost}
\ds J_2(t,\xi;0,u_2(\cd))=
\dbE\Big\{- e^{-\rho (T-t)}  X(T)^2-\2n\int_t^T\3ne^{-\rho (s-t)} R u_2(s)^2ds\Big\},
\ee
by choosing $u_2\in L^2(t,T)$ subject to
\bel{LQC2-SDE}\left\{\2n\ba{ll}
\ns\ds dX(s)= u_2(s) ds+\si X(s) dW(s),\qq s\in[t,T],\\
\ns\ds X(t)=\xi.\ea\right.\ee
This is a standard stochastic linear-quadratic optimal control problem,
which can be solved by the results presented in \cite{yong1999stochastic}.

\begin{prop}[\bf Single player with constant discount rate] \label{eq: single player optimal control} The unique optimal closed-loop optimal strategy of player 2 with state equation \rf{LQC2-SDE}
and objective \rf{LQC2:cost} is given by
\bel{eq: optimal strategy}
 \hat \Th_2(s)=-{(\rho-\si^2)  e^{(\rho-\si^2)  s}\over (1+R(\rho-\si^2) )e^{(\rho-\si^2)  T}-e^{(\rho-\si^2)  s}},\qq s\in[0,T].
\ee
In other words, for any given initial time $t\in[0,T)$ and initial state $\xi\in L^2_{\mathcal{F}_t}(\Omega;\dbR)$,
the unique optimal control is given by
\bel{eq: closed loop closed form expression}
\hat u_2(s)=\hat \Th_2(s)\hat X(s)=-{(\rho-\si^2)  e^{(\rho-\si^2)  s}\over (1+R(\rho-\si^2) )e^{(\rho-\si^2)  T}-e^{(\rho-\si^2)  s}} \hat X(s),\qq s\in[0,T],
\ee
with $\hat X$ being the unique solution of the following closed-loop system:
\bel{}\left\{\2n\ba{ll}
\ns\ds d\hat X(s)= \hat \Th_2(s) \hat X(s) ds+\si \hat X(s) dW(s),\qq s\in[t,T].\\
\ns\ds \hat X(t)=\xi.\ea\right.\ee
\end{prop}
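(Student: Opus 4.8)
The plan is to recognize \eqref{LQC2:cost}--\eqref{LQC2-SDE} as a standard homogeneous stochastic LQ control problem after absorbing the discount factor, and then to solve it via the associated Riccati equation. First I would note that since the state equation \eqref{LQC2-SDE} and the cost \eqref{LQC2:cost} contain no linear-in-state or linear-in-control terms, the candidate optimal feedback is linear, $u_2(s)=\Th_2(s)X(s)$, and the optimal value function is quadratic, $V(t,\xi)=-e^{-\rho(T-t)}P(t)\xi^2$ for some deterministic $P(\cd)>0$ (the sign chosen so that the problem is a genuine maximization of a nonpositive functional). One can either invoke the general solvability result in \cite{yong1999stochastic} directly, identifying the coefficients $A=0$, $B=1$, $C=\si$, $D=0$, running weights $Q=0$, $R_{\text{LQ}}=-R$ with the discounted clock, and terminal weight $G=-1$, or proceed by a direct completion-of-squares/verification argument, which I would prefer here since it makes the closed form transparent.

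The key steps, in order: (1) Write the HJB equation for $V(t,\xi)=-e^{-\rho(T-t)}P(t)\xi^2$; the $\xi^2$-coefficient yields a scalar Riccati-type ODE for $P(\cd)$ on $[0,T]$. After dividing out $e^{-\rho(T-t)}$ and using $\partial_t\big(e^{-\rho(T-t)}P\big)=e^{-\rho(T-t)}(\rho P+\dot P)$, the quadratic terms collapse (because $A=0$) to give
\bel{Riccati-plan}
\dot P(s)-(\rho-\si^2)P(s)+\frac{P(s)^2}{R}=0,\qquad P(T)=1.
\ee
(2) Solve \eqref{Riccati-plan}: it is a scalar Bernoulli/Riccati equation, so the substitution $Q=1/P$ linearizes it to $\dot Q+(\rho-\si^2)Q=1/R$, a first-order linear ODE, which I integrate with terminal data $Q(T)=1$ to get $P(s)$ explicitly; matching against \eqref{eq: optimal strategy} via $\hat\Th_2(s)=-P(s)/R$ (the feedback gain from the HJB minimizer, noting $B=1$, $D=0$) should reproduce the stated formula after simplification. (3) Check that $P(s)>0$ for all $s\in[0,T]$ under $0<R\le1$ (this is where the sign/parameter hypotheses matter), which guarantees the feedback is well-defined and the closed-loop SDE $d\hat X=\hat\Th_2\hat X\,ds+\si\hat X\,dW$ has a unique strong solution with the required integrability. (4) Verification: for an arbitrary admissible $u_2(\cd)\in L^2_\dbF(t,T)$, apply It\^o to $s\mapsto -e^{-\rho(s-t)}P(s)X(s)^2$ along the corresponding state trajectory, take expectations (the $dW$ term is a true martingale by the integrability), and complete the square in $u_2$; the cross terms vanish because $P$ solves \eqref{Riccati-plan}, leaving $J_2(t,\xi;0,u_2(\cd))=-P(t)\xi^2 - \dbE\int_t^T e^{-\rho(s-t)}R\big(u_2(s)-\hat\Th_2(s)X(s)\big)^2\,ds$, which is maximized precisely and uniquely at $u_2=\hat\Th_2 X$. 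This simultaneously proves optimality and uniqueness of the closed-loop optimal strategy.

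The main obstacle I anticipate is purely bookkeeping rather than conceptual: getting the discount-factor manipulation right so that the Riccati equation comes out exactly in the form \eqref{Riccati-plan} (the $\rho$ enters through differentiating $e^{-\rho(T-t)}$, the $\si^2$ through the It\^o term, and a sign slip anywhere propagates into the final closed form), and then carrying out the integration of the linearized ODE and the algebraic simplification carefully enough to land on the exact expression \eqref{eq: optimal strategy}. A secondary point requiring a brief argument is positivity of $P$ on all of $[0,T]$ — equivalently, that the denominator $(1+R(\rho-\si^2))e^{(\rho-\si^2)T}-e^{(\rho-\si^2)s}$ never vanishes for $s\in[0,T]$ — which one verifies by monotonicity in $s$ together with the bound $0<R\le1$; this is exactly the place where the standing assumption on $R$ is used, and it is worth flagging that without it the feedback could blow up inside $[0,T]$.
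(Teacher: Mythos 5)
Your overall route --- reduce the problem to a scalar Riccati equation, linearize it with $Q=1/P$, and verify optimality by It\^o's formula plus completion of squares --- is essentially the paper's route (the paper invokes \cite{yong1999stochastic}, writes the Riccati equation \eqref{eq: Riccati1}, and sets $\hat\Th_2=-P/R$), so the architecture is sound. However, the Riccati equation you display carries the wrong sign on the quadratic term, and this is not a harmless slip: it changes the closed form. The correct equation is $\dot P(s)-(\rho-\si^2)P(s)-P(s)^2/R=0$, $P(T)=1$ (the paper's \eqref{eq: Riccati1}), whereas you wrote $\dot P-(\rho-\si^2)P+P^2/R=0$. You can detect the sign from your own step (4): completing the square gives $Ru_2^2+2PXu_2=R\,(u_2+PX/R)^2-P^2X^2/R$, so for the $X^2$-terms to cancel in the It\^o expansion of $-e^{-\rho(s-t)}P(s)X(s)^2$ one needs $\dot P-\rho P+\si^2P-P^2/R=0$, i.e.\ the minus sign; equivalently, the maximizer of $-Ru_2^2-2PXu_2$ contributes $+P^2X^2/R$, not $-P^2X^2/R$. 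Concretely, your equation linearizes to $\dot Q+(\rho-\si^2)Q=1/R$ (the correct one gives $\dot Q+(\rho-\si^2)Q=-1/R$), whose solution yields $P(s)=R(\rho-\si^2)e^{(\rho-\si^2)s}\big/\big[(R(\rho-\si^2)-1)e^{(\rho-\si^2)T}+e^{(\rho-\si^2)s}\big]$; the resulting feedback $-P/R$ agrees with \eqref{eq: optimal strategy} only at $s=T$ and is a genuinely different function for $s<T$. The likely source of the confusion is the sign bookkeeping in your ansatz ($V=-e^{-\rho(T-t)}P\xi^2$ together with $G=-1$, $R_{\rm LQ}=-R$): since \eqref{LQC2:cost} discounts from the running initial time $t$, the value function is simply $-P(t)\xi^2$ with no exponential prefactor, and working with the minimization of $-J_2$ from the start avoids the double negation.

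A secondary, minor point: the nonvanishing of the denominator $(1+R(\rho-\si^2))e^{(\rho-\si^2)T}-e^{(\rho-\si^2)s}$ on $[0,T]$ does not use the upper bound $R\les1$. Its only possible zero is at $s=T+(\rho-\si^2)^{-1}\ln\big(1+R(\rho-\si^2)\big)$, which exceeds $T$ whenever it is defined, for every $R>0$; so only positivity of $R$ is needed here. The constraint $R\les 1$ is what makes the two-player Riccati equation \eqref{Ric-1} solvable and plays no role in this single-player proposition.
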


The closed-loop optimal strategy \reff{eq: closed loop closed form expression} as a function of $s\in [0,T]$ for various levels of the model's parameters given by $(T,\sigma, \rho,R)$ is illustrated in Figure \ref{Fig: 1}. The figure indicates that the optimal feedback is negative for all parameter values, implying that lobbying against the wind turbine is optimal and lobbying effort against turbines increases with larger output since $| \hat u_2|$ increases with $\hat X$. The figure also shows that lobbying effort against the turbine increases when the cost of effort $R$ diminishes, the discount rate decreases, and the horizon $T$ decreases. When the discount rate decreases, the decision maker becomes more patient and is willing to exert more effort today to reduce output in the future. With a shorter horizon, the duration over which the decision maker can lobby is shorter, and as a result, efforts intensify. Moreover, Figure \ref{Fig: 1} demonstrates that lobbying effort intensifies when $\sigma$ increases. With more risk, it becomes more important to control the state.

\begin{figure}[h!]
  \includegraphics[width=1.0\textwidth]{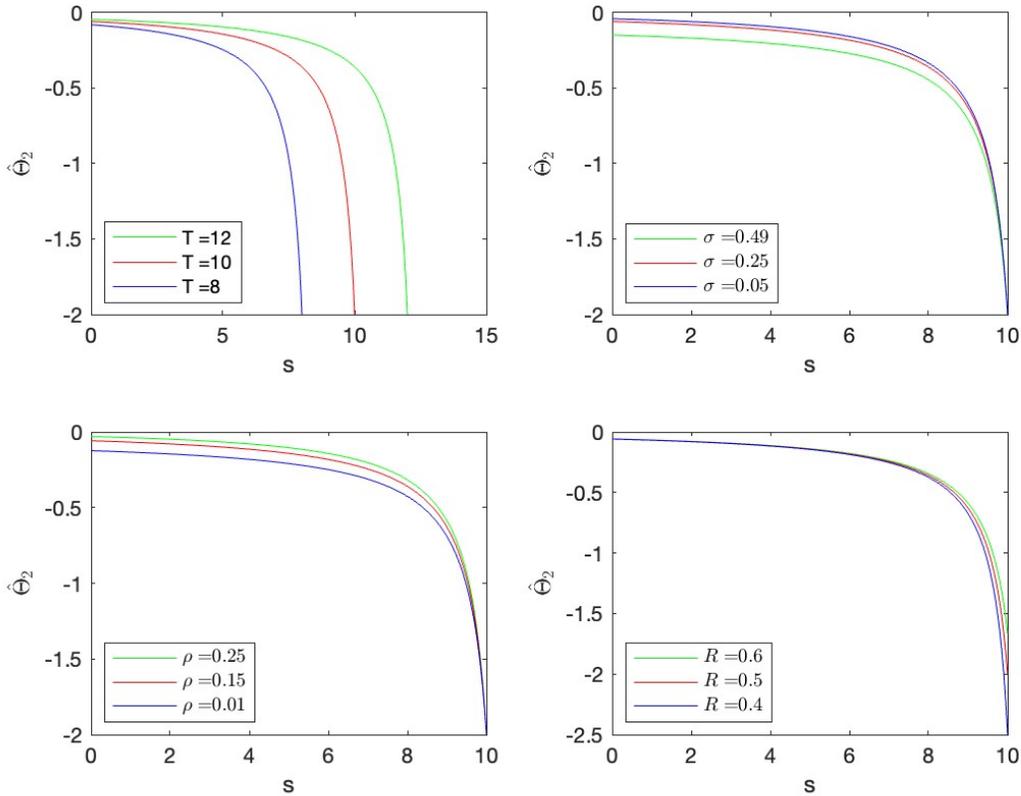}
   \caption{The panels display the optimal closed-loop control $\hat{\Theta}_2(s)$ for $0\les s \les T$ based on formula \reff{eq: optimal strategy}. The baseline parameter values used are $T = 10$, $\sigma = 0.25$, $\rho = 0.15$, and $R = 0.5$.
  }  \label{Fig: 1}
\end{figure}

\subsection{Single player with non-constant discount rate}

In this subsection, we consider the state equation \rf{SDE2} and the objective \rf{LQG2:cost} with $u_1=0$
and  the discount function $\a(t) = \lambda e^{-\rho t} + (1 -\lambda) e^{-\gamma t}$ for some $\gamma>\rho>0$,  and $\lambda\in(0,1)$.
More precisely, player 2 hopes to maximize the following functional
\bel{TI-LQC2:cost}
\ds J_2(t,\xi;0,u_2(\cd))=
\dbE\Big\{- \a(T-s) X(T)^2-\2n\int_t^T\3n \a(s-t) R u_2(s)^2ds\Big\},
\ee
by choosing $u_2\in L^2(t, T)$ subject to
\bel{TI-LQC2-SDE}\left\{\2n\ba{ll}
\ns\ds dX(s)= u_2(s) ds+\si X(s) dW(s),\qq s\in[t,T],\\
\ns\ds X(t)=\xi.\ea\right.\ee
Since the discount $\a(\cd)$ is a non-exponential function,
the above problem is time-inconsistent.
We define the consistent planning strategies that we call equilibrium closed-loop strategies as follows:

\begin{defn}[\bf Equilibrium closed-loop strategy] \rm \label{Def: CPS}
We call a closed-loop strategy $\ti \Th_2(\cd)\in L^2[0,T]$, with  $\ti X(\cd)$ being the corresponding state process,
an {\it equilibrium strategy} if
\bel{def-equilibriumC1}
\begin{aligned}
&\limsup_{\e\to 0^+} {J_2(t, X^\e(t);0,\Th^\e_2(\cd)X^\e(\cd))
-J_2(t,\ti X(t);0,\ti \Th_2(\cd)\ti X(\cd))\over\e}\les 0,
\end{aligned}
\ee
for  any $t\in[0,T)$ and $u_2(\cd)\in L^2[t,T]$, where
\bel{def-equilibriumC2}
\Th_2^\e(s)X^\e(s)\deq\left\{\2n\ba{ll}
\ds\ti\Th_2(s)X^\e(s),\q&s\in[t+\e,T];\\
\ns\ds u_2(s),\q&s\in[t,t+\e),\ea\right.\ee

with $X^\e(\cd)$ being the corresponding state process.
\end{defn}

The term ``equilibrium strategy" refers to the consistent planning approaches envisioned by \cite{strotz1956myopia}. According to the definition in condition \reff{def-equilibriumC1}-\reff{def-equilibriumC2}, an equilibrium strategy exists when no player who can control the system during the time interval $[t, t + \epsilon]$ has an incentive to deviate from the closed-loop policy $\ti {\Theta}_2$, provided that all players apply the same strategy $\ti {\Theta}_2$ during the remaining period $[t+\epsilon, T]$. This condition need to be satisfied when $\e$ is arbitrarily small. In other words, an equilibrium strategy is a closed-loop control that allows all players to maximise their respective objectives without any player having an incentive to deviate from it.  The construction \reff{def-equilibriumC1}-\reff{def-equilibriumC2} is sometimes called the spike variation approach introduced first in a deterministic setting by \cite{ekeland2006noncommitment}). Using the multi-person differential game method given by \cite{yong2017linear}, we get the corresponding equilibrium strategy.

\begin{prop}[\bf Characterization of the equilibrium closed-loop strategy] \label{Prop: closed-loop} The  equilibrium strategy associated with the  problem \reff{TI-LQC2:cost}--\reff{TI-LQC2-SDE} is given by
\bel{}
\ti \Th_2(s)=-{P(s,s)\over R},\qq s\in[0,T],
\ee
where, for any $0 \les t \les s \les T$,  the function $P(\cd,\cd)$ is the unique solution of the following Riccati equation:
\bel{RE-TIC}\left\{
\begin{aligned}
& P_s(t,s)+P(t,s)\si^2-2{ P(t,s)P(s,s)\over R}\\
&\q+[\l e^{-\rho(s-t)}+(1-\l)e^{-\gamma(s-t)}]{P(s,s)^2\over R}=0,\\
& P(t,T)=\l e^{-\rho(T-t)}+(1-\l)e^{-\g(T-t)}.
\end{aligned}\right.
\ee
\end{prop}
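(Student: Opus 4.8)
The plan is to derive the equilibrium strategy via the spike-variation (multi-person game) method of \cite{yong2017linear}, adapted to the present linear-quadratic setting with $u_1=0$. First I would fix an initial time $t\in[0,T)$, an initial state $\xi$, a candidate closed-loop feedback $\ti\Th_2(\cd)$ with corresponding state $\ti X(\cd)$ solving $d\ti X(s)=\ti\Th_2(s)\ti X(s)\,ds+\si\ti X(s)\,dW(s)$, and an arbitrary perturbing control $u_2(\cd)\in L^2[t,T]$. On $[t,t+\e)$ the perturbed state $X^\e$ follows $dX^\e=u_2\,ds+\si X^\e\,dW$, and on $[t+\e,T]$ it follows the feedback dynamics; I would expand $X^\e(s)-\ti X(s)$ to first order in $\e$, obtaining a variational process $\delta(\cd)$ that, for $s\ge t+\e$, solves the linear SDE $d\delta(s)=\ti\Th_2(s)\delta(s)\,ds+\si\delta(s)\,dW(s)$ with initial "jump" $\delta(t+\e)\approx (u_2(t)-\ti\Th_2(t)\xi)\e$ — the standard spike-variation increment.

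Next I would substitute into $J_2$ and compute the Gateaux-type derivative appearing in \reff{def-equilibriumC1}. Because the running cost is $-\a(s-t)Ru_2(s)^2$ and the terminal cost is $-\a(T-t)X(T)^2$, and because $v$-components vanish in this homogeneous problem, the leading-order term in $\e$ will be a linear functional of the spike increment $\eta:=u_2(t)-\ti\Th_2(t)\xi$. Introducing the adjoint/value function — concretely, writing the "cost-to-go" along the equilibrium as a quadratic form $\dbE[P(t,s)\,\ti X(s)^2]$-type object evaluated at the running time — I would identify the coefficient of $\e$ and require it to be $\le 0$ for all admissible $u_2(t)$; since $u_2(t)$ ranges over an affine family, optimality forces the linear part to vanish and the quadratic coefficient to be negative, giving the pointwise condition $R\ti\Th_2(s)+P(s,s)=0$, i.e. $\ti\Th_2(s)=-P(s,s)/R$. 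The function $P(t,s)$ here is precisely the value-function coefficient for the "$t$-self", so it carries the discount factor $\a(s-t)=\l e^{-\rho(s-t)}+(1-\l)e^{-\gamma(s-t)}$ frozen at the evaluation time $t$; differentiating the flow $s\mapsto \dbE[P(t,s)X(s)^2]$ along the equilibrium dynamics and using Itô's formula produces the ODE in \reff{RE-TIC}: the term $P(t,s)\si^2$ from the diffusion, the term $-2P(t,s)P(s,s)/R$ from the drift $\ti\Th_2=-P(s,s)/R$ entering twice (once through the feedback in the dynamics, once through the square in the cost), the source term $[\l e^{-\rho(s-t)}+(1-\l)e^{-\gamma(s-t)}]P(s,s)^2/R$ from the running cost evaluated at the equilibrium control $\ti\Th_2(s)\ti X(s)$, and the terminal condition $P(t,T)=\a(T-t)$ from the terminal payoff.

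The main obstacle is handling the \emph{non-locality} in $t$: unlike the exponential case, $P(t,s)$ genuinely depends on both arguments, so \reff{RE-TIC} is a flow of ODEs in $s$ indexed by $t$, coupled through the diagonal value $P(s,s)$ — one must verify that this coupled system has a well-defined (unique) solution and that the diagonal $P(s,s)$ is itself well-posed and bounded, so that $\ti\Th_2(\cd)=-P(\cd,\cd)/R\in L^2[0,T]$. I would address this by first writing the ODE for the diagonal $g(s):=P(s,s)$ obtained by adding the $\pa_t$ and $\pa_s$ derivatives along $t=s$ (using $\pa_t P(t,s)|_{t=s}$ computed from the equation), reducing to a scalar Riccati-type ODE with terminal value $g(T)=1$; local existence is standard, and the sign structure ($R>0$, $\a>0$, and the $0<R\le1$ hypothesis noted in the model section for existence of equilibria) should be invoked to rule out finite-time blow-up, after which $P(t,\cd)$ is recovered for each fixed $t$ by solving the now-linear (in $P(t,\cd)$) equation with the known inhomogeneous term built from $g$. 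I would also remark that a verification step — checking that the candidate $\ti\Th_2$ actually satisfies \reff{def-equilibriumC1} rather than merely the stationarity condition — follows because the second-order term in $\e$ contributes at order $\e^2$ and the quadratic coefficient in $u_2(t)$ is $-R\a(0)<0$, so the $\limsup$ is genuinely $\le 0$.
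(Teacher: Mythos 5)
Your derivation is sound in outline, but it follows a genuinely different route from the paper's. The paper does not prove Proposition \ref{Prop: closed-loop} by a direct spike-variation computation at all: it invokes the multi-person differential game method of \cite{yong2017linear}, i.e.\ partition $[0,T]$, let each ``self'' precommit on its own subinterval, solve a standard LQ problem there, glue the resulting Riccati equations, and pass to the limit $\|\Pi\|\to0$; the self-contained content the paper supplies is precisely the approximation algorithm and its convergence (Theorem \ref{thm: approx single player}), with well-posedness of \reff{RE-TIC} obtainable by the Volterra-equation method of Theorem \ref{thn:well-ERE}. Your plan --- derive the stationarity condition $R\ti\Th_2(t)+P(t,t)=0$ directly from the first-order term of the spike perturbation, obtain \reff{RE-TIC} by It\^o's formula on the cost-to-go $\dbE[P(t,s)X(s)^2]$, prove well-posedness of the non-local equation separately, and close with a verification argument --- is essentially the strategy the paper deploys for the \emph{two-player} case (Theorems \ref{thn:well-ERE} and \ref{thm:VT}), transplanted to the single-player problem. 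What the paper's route buys is a discrete-time ``precommitment'' foundation for the equilibrium and an implementable algorithm; what yours buys is a direct characterization that does not pass through a discretization. Your verification sketch (local game values differ from the frozen-coefficient values by $O(\e^2)$, quadratic coefficient $-R\a(0)<0$) is exactly the mechanism in the paper's proof of Theorem \ref{thm:VT}, so that step is fine.

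One point needs repair. You claim the diagonal $g(s):=P(s,s)$ satisfies ``a scalar Riccati-type ODE'' obtained by adding $\pa_tP$ and $\pa_sP$ along $t=s$, with $\pa_tP(t,s)$ ``computed from the equation.'' It cannot be: \reff{RE-TIC} prescribes only $\pa_sP(t,s)$, and $t$ enters through the terminal datum $\a(T-t)$ and the source $\a(s-t)$, so $\pa_tP(t,s)$ must be obtained by differentiating the variation-of-constants representation
\begin{equation*}
P(t,s)=\a(T-t)\,e^{\int_s^T[\si^2-{2\over R}g(r)]dr}+\int_s^T e^{\int_s^r[\si^2-{2\over R}g(\t)]d\t}\,\a(r-t)\,{g(r)^2\over R}\,dr
\end{equation*}
in $t$, which produces an integral over $[s,T]$ of $\pa_t\a(\cd-t)\,g(\cd)^2$. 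The resulting equation for $g$ is therefore a non-local Volterra integro-differential equation (the analogue of \rf{VDI} in the proof of Theorem \ref{thn:well-ERE}), not an ODE. Your existence strategy still works --- the a priori bounds $0\les g\les\Xi$ from the sign of $\pa_t\a\les 0$ and a truncation argument give global solvability, exactly as in the paper --- but the argument must be run on the Volterra equation, and uniqueness/Lipschitz estimates must account for the non-local term.
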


\begin{remark}
The existence and uniqueness of the differential equation \reff{RE-TIC} is proven in \cite{yong2017linear}. The results in Proposition~\ref{Prop: closed-loop} simply apply the result of \cite{yong2017linear} to a special case. \footnote{Theorem \ref{thn:well-ERE}, which will be presented later in this paper, also establishes the solvability of \rf{RE-TIC} but with a different method from that of \cite{yong2017linear}.} 
%
%

\end{remark}

Note that \rf{RE-TIC} is a non-local ordinary differential equation (ODE, for short),
whose solution cannot be solved explicitly. The non local feature stems from the fact that the equation  \rf{RE-TIC} involves the evaluation of the solution at two different point in time $t$ and $s$, that is, outside the diagonal of $\Delta[0,T]$.

We now describe an algorithm that will provide
an explicit approximate sequence for the solution of  \rf{RE-TIC}.

\ms


\ms

Let $\Pi$ be a partition of the time interval $[0,T]$, with $t_0=0$, $t_1={T\over N}$, $t_2={2T\over N}$,..., $t_{N-1}={(N-1)T\over N}$, $t_N=T$.
Then $\|\Pi\|=\max_{i\ges1}(t_i-t_{i-1})={1\over N}$. The algorithm begins by constructing an approximation on the final subinterval, $\Delta[t_{N-1}, t_N]$, and then proceeds to build the remainder of the approximation through backward recursion.

\ms
\noindent
\textbf{Step 1: Approximation on $\D[t_{N-1},t_N]$:} Let
\bel{Algorithm step 1}
P(t_{N-1};s)={1\over {1\over e^{\si^2 (T-s)}\a(T-t_{N-1})}+\int_s^T {1\over  e^{\si^2 (r-s)}\a(r-t_{N-1})R}dr},\qq  s\in[t_{N-1},T].
\ee
Denote
\bel{Algorithm step 2}
\begin{aligned}
&P^\Pi(t,s)=P(t_{N-1};s),\qq  (t,s)\in\D[t_{N-1},T],\\
&\Th_2^\Pi(s)=- {1\over \a(s-t_{N-1})R}P(t_{N-1};s),\qq s\in[t_{N-1},t_N].
\end{aligned}
\ee

%

\ms
\noindent
\textbf{Remaining steps: Approximation on $\D[t_{k},t_N]$ with $k=0,1,...,N-2$:} Assuming that  $P^\Pi(\cd,\cd)$ has been determined on the set $\D[t_{k+1},T]$
and  that $\Th^\Pi_2(\cd)$ has been determined on the interval  $[t_{k+1},T]$.
Let
\bel{Algorithm step 3}\begin{aligned}
&P(t_{k};s)=\a(T-t_{k})e^{\int_s^T [2\Th_2^\Pi(\t)+\si^2]d\t}+\int_s^T e^{\int_s^r [2\Th_2^\Pi(\t)+\si^2]d\t}\a(r-t_k)R\Th_2^\Pi(r)^2dr,\\\
& s\in[t_{k+1},t_N];\qq P(t_{k};s)={1\over {1\over e^{\si^2 (t_{k+1}-s)} P(t_{k};t_{k+1})}+\int_s^{t_{k+1}} {1\over e^{\si^2 ( r-s)} \a(r-t_k)R}dr},\qq s\in[t_k,t_{k+1}].
\end{aligned}
\ee
Denote
\bel{Algorithm step 4}
\begin{aligned}
&P^\Pi(t,s)=P(t_{k};s),\qq  (t,s)\in\D[t_{k},T]\setminus\D[t_{k+1},T],\\
&\Th_2^\Pi(s)=- {1\over \a(s-t_{k})R}P(t_{k};s),\qq s\in[t_k,t_{k+1}].
\end{aligned}
\ee

Note that the above approximation on $\D[t_{k},t_N]$ gives a formula of general term. Then for any $N>0$,
$P^\Pi(\cd,\cd)$ and $\Th_2^\Pi(\cd)$ can be explicitly obtained on $[0,T]$ by induction.

We now state a convergence result in Theorem \ref{thm: approx single player}. The result is due to \cite{yong2017linear} but for completeness, we give a shorter and self contained proof in the appendix.

\begin{thm}[\bf Convergence to the equilibrium closed-loop strategy]\label{thm: approx single player}
The unique solution $P(\cd,\cd)$ of \reff{RE-TIC} can be obtained as the limit
\bel{}
P(t,s)=\lim_{\|\Pi\|\to 0}P^\Pi(t,s),\qq
\ee
where the function $P^\Pi$ is produced by  the algorithm \reff{Algorithm step 1}--\reff{Algorithm step 4}.
Thus the equilibrium strategy $\ti\Th_2(\cd,\cd)$ can also be obtained as the limit
\bel{eq: approximation single player}
\ti \Th_2(s)=\lim_{\|\Pi\|\to 0}\Th^\Pi(s)=\lim_{\|\Pi\|\to 0}-{P^\Pi(s,s)\over R},\qq s\in[0,T].
\ee
\end{thm}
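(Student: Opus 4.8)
The plan is to show that the backward recursive algorithm \rf{Algorithm step 1}--\rf{Algorithm step 4} produces, on each mesh, an approximation $P^\Pi$ that converges uniformly on $\D[0,T]$ to the unique solution $P$ of the non-local Riccati equation \rf{RE-TIC}, and then conclude the statement about $\ti\Th_2$ by continuity. I would proceed subinterval by subinterval, from $[t_{N-1},T]$ backwards, so that the main work is a one-step stability estimate that is then iterated $N$ times.

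\textbf{Step 1: Identify each algorithm step as an exact local Riccati solve with frozen non-local data.} First I would observe that formulas \rf{Algorithm step 1} and \rf{Algorithm step 3} are nothing but the closed-form solution of the \emph{local} (classical) Riccati ODE in the variable $s$ obtained from \rf{RE-TIC} by replacing the non-local coupling $P(s,s)$ on the last subinterval by its value at the left endpoint $t_k$ (and by keeping the already-computed $\Th_2^\Pi$ on the intervals to the right). Concretely, on $\D[t_{N-1},T]$ the equation $P_s + P\si^2 - 2PP/R + \a\, P^2/R = 0$ with the factor $P(s,s)$ replaced by the constant $P(t_{N-1};t_{N-1})$ linearizes into a Bernoulli/Riccati ODE whose explicit integrating-factor solution is exactly \rf{Algorithm step 1}; similarly for \rf{Algorithm step 3}. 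This step is a direct computation: plug the formula into the ODE and check it solves the corresponding frozen-coefficient equation with the right terminal condition.

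\textbf{Step 2: One-step consistency estimate.} Next I would estimate the defect incurred by freezing the non-local argument. On a subinterval $[t_k,t_{k+1}]$ of length $\|\Pi\|=1/N$, the true solution $P(t,s)$ varies by $O(\|\Pi\|)$ in its first argument (by the a priori bounds and smoothness of $P$ furnished by \cite{yong2017linear}, which I may assume); hence the frozen equation differs from \rf{RE-TIC} by a source term of size $O(\|\Pi\|)$ uniformly. A Gronwall argument on $[t_k,t_{k+1}]$, using the uniform bounds on $P$, $P^\Pi$, $\Th_2^\Pi$ and $\a$, then gives
\[
\sup_{\D[t_k,t_{k+1}]}|P^\Pi - P| \;\le\; (1+C\|\Pi\|)\,\sup_{s=t_{k+1}}|P^\Pi(\cd,s)-P(\cd,s)| \;+\; C\|\Pi\|^2,
\]
where $C$ depends only on $T,\si,\rho,\g,\l,R$ and the a priori bound on $P$. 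The propagation of the error from the diagonal value $P(s,s)$ into the off-diagonal region via \rf{Algorithm step 4} is handled the same way, since there $\Th_2^\Pi(s)=-P^\Pi(s,s)/(\a R)$ inherits the diagonal error.

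\textbf{Step 3: Iterate and pass to the limit.} Iterating the one-step bound over the $N=1/\|\Pi\|$ subintervals, starting from zero error at $s=T$, yields $\sup_{\D[0,T]}|P^\Pi-P|\le (1+C\|\Pi\|)^{N}\cdot C\|\Pi\|^2 \le C' \|\Pi\|$, which tends to $0$ as $\|\Pi\|\to0$. Uniform convergence of $P^\Pi(s,s)$ then gives \rf{eq: approximation single player} immediately, since $\ti\Th_2(s)=-P(s,s)/R$ by Proposition~\ref{Prop: closed-loop}.

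\textbf{Main obstacle.} The delicate point is Step 2: getting a \emph{uniform} (in the mesh and in $k$) Gronwall constant requires a priori $L^\infty$ bounds on $P^\Pi$ and $\Th_2^\Pi$ that hold along the whole recursion, not just for the exact solution. I would establish these by a separate induction, showing that the explicit formulas \rf{Algorithm step 1} and \rf{Algorithm step 3} keep $P^\Pi$ bounded between $0$ and a constant depending only on the data (using $0<R\le1$, $\a>0$, and positivity of the integrands, so that the denominators in \rf{Algorithm step 1} stay bounded away from $0$ — this is where the sign structure of the problem matters). Once that uniform bound is in hand, the rest is routine Gronwall bookkeeping; the self-contained proof in the appendix presumably follows exactly this route with a method different from \cite{yong2017linear}.
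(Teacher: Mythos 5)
Your proposal is correct in substance and follows essentially the same route as the paper's appendix proof: recognize $P^\Pi$ as the exact solution of a frozen-data Riccati/Lyapunov system, establish uniform a priori bounds $0\les P^\Pi(t,s)\les e^{\si^2(T-s)}$ by induction over the subintervals, and close with a Gronwall estimate whose $O(\|\Pi\|)$ consistency source yields $|P-P^\Pi|\les K\|\Pi\|$ (the paper runs Gronwall once globally on the integral form rather than iterating one-step bounds, which is the same estimate). One correction to your Step 1: the frozen quantity is not the diagonal value $P(s,s)$ (freezing that to a constant would produce a linear ODE, not the Bernoulli-type equation solved by \rf{Algorithm step 1}) but the first argument of the discount, i.e.\ $\a(s-t)$ is replaced by $\a(s-t_k)$ and the feedback by $\Th_2^\Pi(s)=-P^\Pi(s,s)/(\a(s-t_k)R)$; since $\a(s-t_k)=1+O(\|\Pi\|)$ on the diagonal block, the consistency error remains $O(\|\Pi\|)$ and your Steps 2--3 proceed exactly as in the paper.
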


Figure \ref{Fig: 2} illustrates the equilibrium strategy $\ti \Th_2(\cdot)$ obtained through the use of algorithm \reff{Algorithm step 1}-\reff{Algorithm step 4}, where the step size of the partition $\Pi$ approaches zero. The results show that the optimal strategy's properties in Figure \ref{Fig: 1} hold for the equilibrium strategy as well. Additionally, the absence of "overshooting" is demonstrated in Figure \ref{Fig: 3}, where the lobbying equilibrium with non-constant discount is bounded by the optimal lobbying strategy, both when the short-term self is in control and when the long-term self is in control.

\begin{figure}[h!]
  \includegraphics[width=0.8\textwidth]{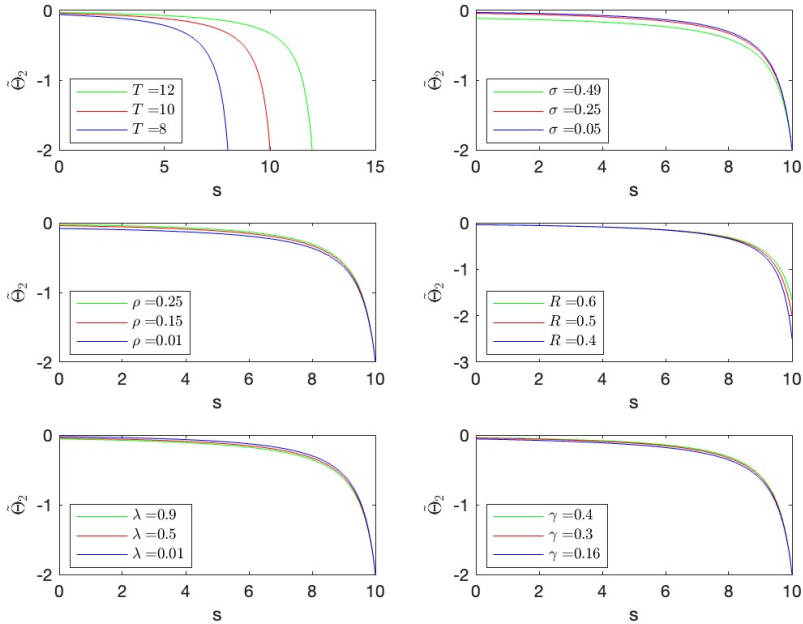}
   \caption{The panels display the equilibrium closed-loop control $\tilde{\Theta}_2(s)$ for $0 \leq s \leq T$ associated with the single player game with non-constant discounting, and is based on the approximation given in equation \reff{eq: approximation single player}. The baseline parameter values used are $T = 10$, $\sigma = 0.25$, $\rho = 0.15$, $R = 0.5$, $\lambda = 0.3$, and $\gamma = 0.3$.
  } \label{Fig: 2}
\end{figure}

\begin{figure}[h!]
  \includegraphics[width=0.85\textwidth]{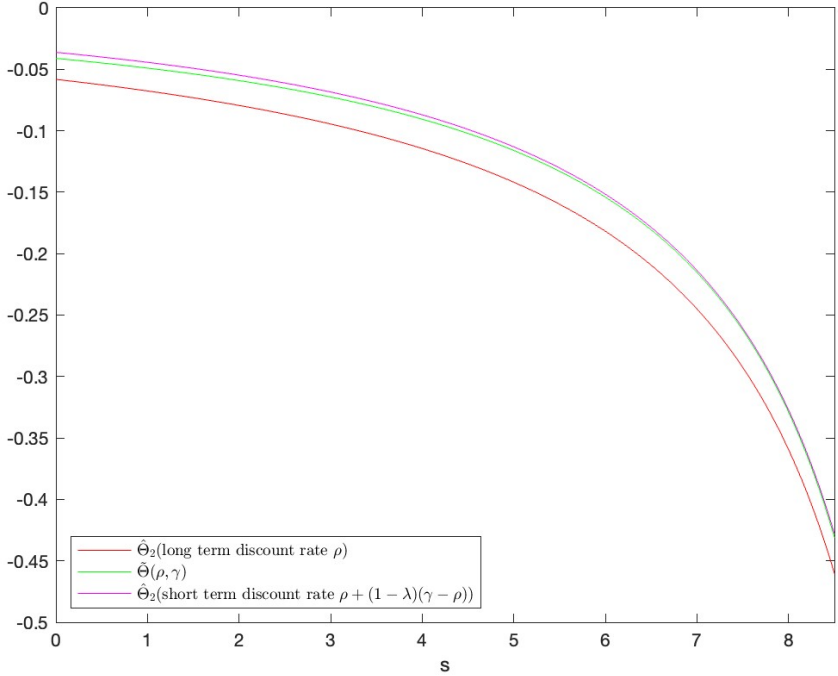}
  \caption{The figure shows the equilibrium closed-loop control $\tilde{\Theta}_2(s)$ in green, and the optimal closed-loop control $\hat{\Theta}_2(s)$ when the discount rate is $\rho$ (in pink) or $\rho + (1-\lambda)(\gamma -\rho)$ (in red) for $0 \les s \les T$. The strategy $\tilde{\Theta}_2(s)$ is obtained using the approximation \reff{eq: approximation single player}. The baseline parameter values are $T = 10$, $\sigma = 0.25$, $\rho = 0.15$, $R = 0.5$, $\lambda = 0.5$, and $\gamma = 0.3$.
  } \label{Fig: 3}
\end{figure}

\newpage

\section{Two-Person Zero-Sum Games with Constant Discounting}

In this section, we shall consider the game problem
described by \rf{SDE2}--\rf{LQG1:cost}--\rf{LQG2:cost} with $\a(t) = e^{-\rho t}$ for some $\rho>0$. Recall that we denoted the game problem by Problem (G). We first define a more general auxiliary game that will be used for solving our game both with constant and non constant discounting. Second, we provide the closed-form closed-loop saddle strategy for Problem (G) with constant discounting.

\ms
\subsection{An auxiliary game}
We introduce the following more general game problem with the state \rf{SDE2} and the objective
\bel{LQG1:cost1}\ba{ll}
\ds J(t,\xi;u_1(\cd),u_2(\cd))=
\dbE_t\Big\{e^{-\rho (T-t)} G X(T)^2+\2n\int_t^T\3n e^{-\rho(s-t)}\( R_1(s) u_1(s)^2 +R_2(s) u_2(s)^2\)ds\Big\},\ea\ee
%
%
%
%
for some scalar $G$ and deterministic functions $R_1$ and $R_2$. Note that the above  reduces to Problem (G) with constant discounting
when $G=1$, $R_1(\cd)=-1$ and $R_2(\cd)=R$.
By  \cite{sun2014linear} (Theorem $5.2$ in page $4103$), we have the following characterization of the closed-loop saddle strategy of the above game .

\begin{lem}[\bf Auxiliary result]\label{lem:LQG-closed-Nash-kehua}
Suppose that $G \ges 0$, $R_1(\cd)< 0$ and $R_2(\cd)> 0$.
The game problem with state \rf{SDE2} and objectives \rf{LQG1:cost1} admits a closed-loop saddle strategy
$(\Th^*_1(\cd),$ $\Th^*_2(\cd))$ if and only if the Riccati equation
\bel{Ric}\left\{\begin{aligned}
  & \dot{P}(s)+\si^2 P(s)-\rho P(s) -{R_1(s)+R_2(s)\over R_1(s)R_2(s)}P(s)^2=0, \\
  & P(T)=G,
\end{aligned}\right.\ee
admits a solution $P(\cd)\in C([t,T];\dbR)$.
In this case, the unique closed-loop Nash strategy $(\Th^*_1(\cd), \Th^*_2(\cd))$
admits the representation:
\bel{Th^*}
  \Th^*_1(s)=-{P(s)\over R_1(s)},\qq \Th^*_2(s)=-{P(s)\over R_2(s)},\qq s\in[0,T].
\ee
\end{lem}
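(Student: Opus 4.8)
The plan is to reduce this two-player zero-sum problem to a standard stochastic LQ differential game and invoke the closed-loop saddle-point characterization of Sun and Yong. The first step is to rewrite the problem in a time-homogeneous form: for a fixed initial time $t$, substitute $r = s - t$ and absorb the discount factor $e^{-\rho(s-t)}$ into the problem data. Because the state equation \rf{SDE2} has no running drift coefficient on $X$ and the cost \rf{LQG1:cost1} has no cross terms or linear terms, the discounted problem on $[t,T]$ is equivalent, after this shift, to an undiscounted LQ game on $[0,T-t]$ with a modified (exponentially weighted) pair of control weights $\widetilde R_i(r) = e^{-\rho r} R_i(r+t)$ and terminal weight $G$; alternatively, one keeps the discount in place and notes that the Hamiltonian system / Riccati equation simply picks up the $-\rho P(s)$ term. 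I would present it in the latter, cleaner way: the generator of the value function ansatz $V(s,x) = P(s)x^2$ (times the discount) produces exactly the Riccati ODE \rf{Ric}.

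The core of the argument is then a direct application of Theorem 5.2 of \cite{sun2014linear}. That theorem states that the closed-loop saddle strategy of a stochastic LQ game exists if and only if the associated symmetric Riccati equation has a solution on the whole interval that additionally satisfies two convexity/concavity sign conditions on certain combinations of the control-weight matrices and $P$. The key verification step is that, under the hypotheses $G \ges 0$, $R_1(\cd) < 0$, $R_2(\cd) > 0$, these sign conditions hold automatically: here the control-weight "matrices" are the scalars $R_1(s)$ and $R_2(s)$, and the required conditions are essentially $R_1(s) < 0$ (so Player 1's problem is concave — recall Player 1 maximizes and $R_1 = -1 < 0$ in the original game) and $R_2(s) > 0$ (so Player 2's problem is convex). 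Since these are exactly our standing assumptions and do not depend on $P$ in the scalar case, the "if and only if" collapses to: closed-loop saddle strategy exists $\iff$ \rf{Ric} has a global solution $P(\cd) \in C([t,T];\dbR)$. The representation \rf{Th^*} for $(\Th_1^*, \Th_2^*)$ is then read off directly from the feedback formula in that theorem, which in the scalar case with no $B$-type state coupling in the cost reduces to $\Th_i^*(s) = -P(s)/R_i(s)$.

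I would also record the short self-contained derivation of why \rf{Ric} is the right equation, so the reader is not forced to unpack the general theorem: plug $u_i = \Th_i X$ into \rf{SDE2} and \rf{LQG1:cost1}, apply It\^o to $e^{-\rho(s-t)}P(s)X(s)^2$, and complete the square in $(\Th_1,\Th_2)$; the first-order (stationarity) conditions give $\Th_i = -P/R_i$, and substituting back forces the coefficient of $X^2$ to vanish, which is precisely \rf{Ric}, with terminal condition $P(T) = G$ matching the terminal cost. The sign hypotheses are what guarantee the completed square is a genuine saddle (concave in $u_1$, convex in $u_2$) rather than a mere critical point. The main obstacle is bookkeeping rather than conceptual: one must be careful that the uniqueness claim ("the unique closed-loop Nash strategy") really does follow — this uses that the Riccati solution, if it exists, is unique (standard ODE uniqueness, since the right-hand side of \rf{Ric} is locally Lipschitz in $P$), and that distinct closed-loop saddle strategies would yield distinct Riccati solutions via the same computation. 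I expect the only genuinely delicate point to be citing Theorem 5.2 of \cite{sun2014linear} in exactly the right normalization (their sign conventions for maximizer vs. minimizer, and the placement of the discount), so I would state explicitly the dictionary between their $(A,B_1,B_2,Q,R_{11},R_{22})$ and our data before quoting the conclusion.
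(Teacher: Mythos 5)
Your proposal is correct and takes essentially the same route as the paper, which in fact offers no proof of this lemma at all beyond the bare citation of Theorem 5.2 of \cite{sun2014linear}; your dictionary (controls absent from the diffusion, so the sign conditions reduce to $R_1<0$, $R_2>0$ independently of $P$, collapsing the equivalence to solvability of \rf{Ric}) and your completion-of-squares check of \rf{Ric} and \rf{Th^*} are both accurate.
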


\subsection{Closed-loop saddle strategies}

We now turn to Problem (G) with constant discounting. By Lemma \ref{lem:LQG-closed-Nash-kehua}, we get the following result.

\begin{prop}{\bf Closed-loop saddle Strategies for Problem (G) with constant discounting.}\label{thm:LQG-closed-Nash-kehua}
 \rm Problem ((G) with $\alpha(t) = e^{- \rho t}$ admits a unique closed-loop saddle strategy
$(\Th^*_1(\cd),\Th^*_2(\cd))$, given by
\bel{eq: Nash for Problem N}\begin{aligned}
\Th_1^*(s)&={R(\rho-\si^2) e^{(\rho-\si^2) s}\over [1-R+R(\rho-\si^2) ]e^{(\rho-\si^2)  T}-(1-R)e^{(\rho-\si^2)  s}},\\
\Th_2^*(s)&=-{(\rho-\si^2) e^{(\rho-\si^2) s}\over [1-R+R(\rho-\si^2) ]e^{(\rho-\si^2)  T}-(1-R)e^{(\rho-\si^2)  s}},
\end{aligned}\qq s\in[0,T].
\ee
In particular, if $R=1$, under which the game is symmetric,
then the unique closed-loop Nash strategy $(\Th^*_1(\cd), \Th^*_2(\cd))$
admits the representation:
\bel{}
\Th_1^*(s)=(\rho-\si^2) e^{(\rho-\si^2) (s-T)},\qq
\Th_2^*(s)=-(\rho-\si^2) e^{(\rho-\si^2) (s-T)},\qq s\in[0,T].
\ee
\end{prop}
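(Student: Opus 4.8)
The plan is to invoke Lemma~\ref{lem:LQG-closed-Nash-kehua} with the data of Problem (G) under constant discounting, namely $G = 1$, $R_1(\cd) \equiv -1$ and $R_2(\cd) \equiv R$, and then solve the resulting (now scalar, autonomous) Riccati equation explicitly. First I would check the hypotheses of the lemma: $G = 1 \ges 0$, $R_1 \equiv -1 < 0$, $R_2 \equiv R > 0$ (using $0 < R \les 1$), so the lemma applies provided we exhibit a solution $P(\cd) \in C([0,T];\dbR)$ of \rf{Ric}. Plugging the data in, the coefficient $-\frac{R_1 + R_2}{R_1 R_2} = -\frac{R-1}{-R} = -\frac{1-R}{R}$, so \rf{Ric} becomes the scalar ODE
\be
\dot P(s) + (\si^2 - \rho) P(s) - \frac{1-R}{R}\,P(s)^2 = 0, \qquad P(T) = 1.
\ee

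Second, I would solve this Bernoulli-type equation. Since $P(T) = 1 \neq 0$ and the equation is autonomous, $P$ never vanishes on $[0,T]$, so setting $Q = 1/P$ linearizes it: $\dot Q(s) - (\si^2 - \rho) Q(s) + \frac{1-R}{R} = 0$ with $Q(T) = 1$. This is a first-order linear ODE with constant coefficients, whose solution is elementary; writing $\kappa \deq \rho - \si^2$, one finds
\be
Q(s) = \frac{1-R}{R\kappa} + \Big(1 - \frac{1-R}{R\kappa}\Big) e^{\kappa (s - T)},
\ee
and hence $P(s) = 1/Q(s)$. After clearing denominators, $P(s)$ takes the form $\dfrac{R\kappa\, e^{\kappa(s-T)}}{[\,1-R+R\kappa\,] - (1-R)\,e^{\kappa(s-T)}}$; multiplying numerator and denominator by $e^{\kappa T}$ gives exactly the expression appearing inside \rf{eq: Nash for Problem N}. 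Since this $P$ is continuous on $[0,T]$ (one should note in passing that the denominator does not vanish there — this is the one genuine verification needed, and it uses $0 < R \les 1$ together with $\kappa$ real, so that $[1-R+R\kappa]e^{\kappa T}$ and $(1-R)e^{\kappa s}$ cannot coincide on the relevant range; if $\kappa \les 0$ one argues by sign, if $\kappa > 0$ by comparing magnitudes), Lemma~\ref{lem:LQG-closed-Nash-kehua} delivers both existence and uniqueness of the closed-loop saddle strategy, with $\Th_1^*(s) = -P(s)/R_1(s) = P(s)$ and $\Th_2^*(s) = -P(s)/R_2(s) = -P(s)/R$. Substituting the explicit $P$ yields the two formulas in \rf{eq: Nash for Problem N}.

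Finally, the case $R = 1$ is a direct specialization: then $\frac{1-R}{R} = 0$, the Riccati equation collapses to the linear equation $\dot P + (\si^2 - \rho) P = 0$ with $P(T) = 1$, giving $P(s) = e^{(\rho - \si^2)(s - T)}$ immediately, whence $\Th_1^*(s) = (\rho - \si^2)e^{(\rho-\si^2)(s-T)}$ and $\Th_2^*(s) = -(\rho-\si^2)e^{(\rho-\si^2)(s-T)}$; one can also check this agrees with the general formula by setting $R = 1$ there. I do not anticipate a serious obstacle: the only non-mechanical point is confirming that the denominator in the closed-form $P$ stays away from zero on $[0,T]$ so that $P \in C([0,T];\dbR)$, which is where the standing assumption $0 < R \les 1$ is used; everything else is a routine integration of a Bernoulli ODE and bookkeeping to match the stated form.
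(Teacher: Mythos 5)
Your proposal follows exactly the paper's route: specialize Lemma~\ref{lem:LQG-closed-Nash-kehua} to $G=1$, $R_1\equiv-1$, $R_2\equiv R$, solve the resulting scalar Riccati equation in closed form, and read off $\Th_1^*=P$, $\Th_2^*=-P/R$ from \rf{Th^*}; your added check that the denominator of $P$ never vanishes on $[0,T]$ (so $P\in C([0,T];\dbR)$) is a worthwhile detail the paper leaves to a citation of the Riccati literature. One slip: your displayed formula for $Q=1/P$ does not solve the linear equation $\dot Q-(\si^2-\rho)Q+\tfrac{1-R}{R}=0$, $Q(T)=1$, that you correctly derived — with $\kappa=\rho-\si^2$ and $c=\tfrac{1-R}{R}$ that equation reads $\dot Q=-\kappa Q-c$, whose solution is
\begin{equation*}
Q(s)=-\frac{c}{\kappa}+\Bigl(1+\frac{c}{\kappa}\Bigr)e^{-\kappa(s-T)},
\end{equation*}
i.e.\ your expression with $\kappa$ replaced by $-\kappa$. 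The subsequent form of $P(s)$ you state, and hence the final formulas \rf{eq: Nash for Problem N}, are nevertheless the correct ones (inverting the corrected $Q$, or checking $\dot P=\kappa P+cP^2$ directly, confirms this), so the error is confined to that one intermediate display and does not affect the validity of the argument.
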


To visualize the saddle interaction between the two players, Figure \ref{Fig: 4} plots the closed-loop Nash strategies for Problem (G) as a function of time, at various parameter levels for player $2$. We observe that many of the comparative static intuitions from the single player game hold true in the two player game, with one notable exception: the impact of the cost of lobbying parameter $R$. As shown in the east-south panel of Figure \ref{Fig: 4}, lobbying efforts may be smaller when the cost of lobbying is lower. However, when sufficiently close to the terminal date, the order reverts back to the configuration seen in the single player game.

\begin{figure}[h!]
  \includegraphics[width=1.0\textwidth]{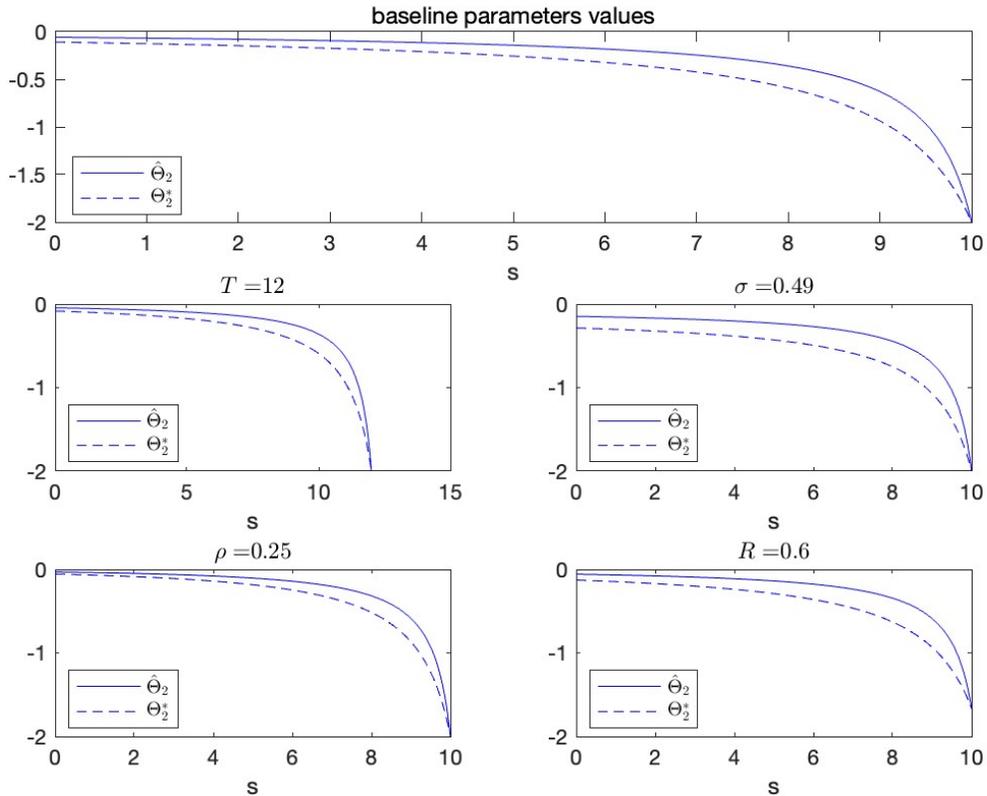}
   \caption{The figure panels display the closed-loop saddle strategies $\Theta_2^*(s)$ for Problem (G) with constant discounting, plotted for $0 \les s \les T$, using the closed-form expression given by equation \reff{eq: Nash for Problem N}. We use the baseline parameter values of $T = 10$, $\sigma = 0.25$, $\rho = 0.15$, $R = 0.5$.
  } \label{Fig: 4}
\end{figure}

Figure \ref{Fig: 5} presents a comparison between Player 2's closed-loop saddle lobbying strategy and the optimal lobbying strategy for a single player. The figure shows that lobbying becomes more intense in the two-player game than in the single player optimization. This trend is consistent across all panels in Figure \ref{Fig: 5} and highlights how strategic behavior translates into preemptive lobbying.

\begin{figure}[h!]
  \includegraphics[width=1.0\textwidth]{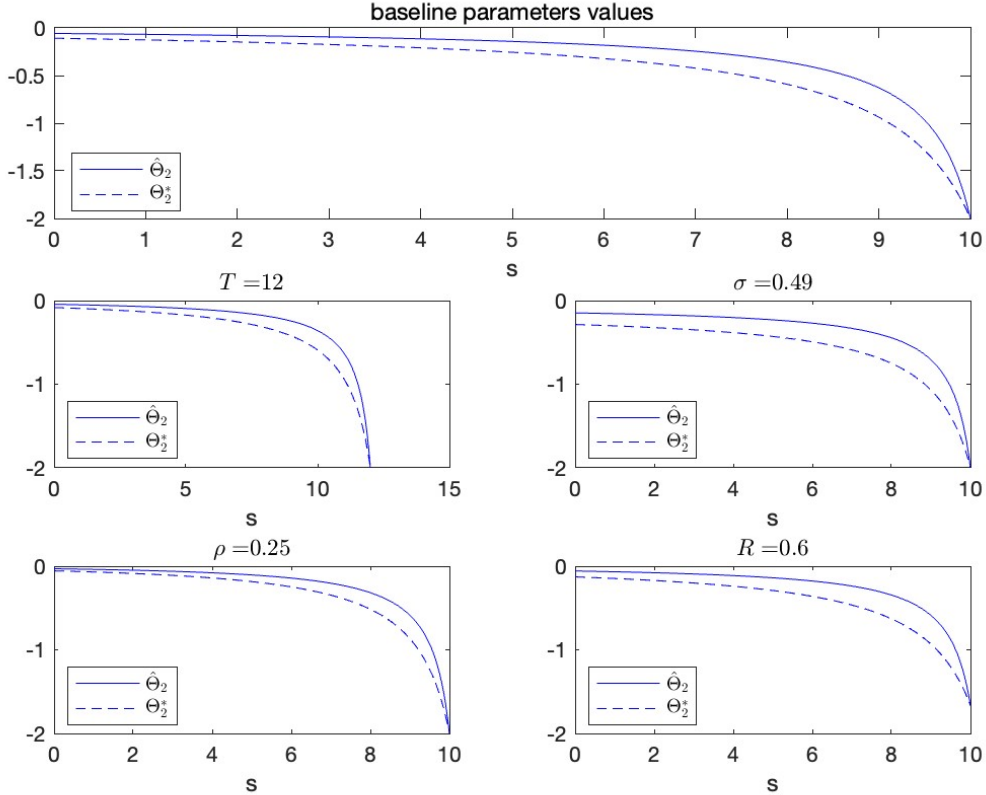}
   \caption{The figure panels display the closed-loop saddle strategies $\Theta_2^*(s)$ for Problem (G) with constant discounting, plotted for $0 \les s \les T$, along with the optimal closed-loop strategy $\hat{\Theta}_2(s)$. We use the baseline parameter values of $T = 10$, $\sigma = 0.25$, $\rho = 0.15$, $R = 0.5$, $\lambda = 0.5$, and $\gamma = 0.3$.
  } \label{Fig: 5}
\end{figure}

\section{Two-Person Zero-Sum Game with Non-constant Discounting} \label{section: zero sum with non constant discount}

In this section, we explore Problem (G) in the case where the discount rate is not constant, and the discount function is defined by \reff{eq: Discount function}. First, we define the equilibrium. Second, we introduce a modified version of Problem (G) in which we divide the time interval $[0,T]$ into subintervals, and assume that players can make commitments during each subinterval. This discretization enables us to identify a differential equation that has the potential to characterize the equilibrium. Third, we demonstrate that the problem is well-posed and that the equilibrium can indeed be characterized by the identified equation. Finally, we present an algorithm that can be used to approximate the equilibrium.

\subsection{Equilibrium definition}

The introduction of non-constant discounting in Problem (G) adds a strategic dimension for each player in two ways. Firstly, each player is strategic in their interaction with the other player. Secondly, each player is strategic in their interaction with their future selves. Consequently, Problem (G) effectively becomes a problem with an infinite number of players: the continuum of incarnations of Player $1$ and the continuum of incarnations of Player $2$. To account for the time inconsistency generated by non-constant discounting, we define the equilibrium closed-loop Nash strategies as follows.

\begin{defn}{\bf Equilibrium closed-loop saddle strategies.}\rm
We say a closed-loop strategy $\bar\Th_i(\cd)\in L^2[0,T]$ ($i=1,2$), with  $\bar X(\cd)$ being the corresponding state process,
satisfies a {\it local saddle property} if
\bel{def-equilibrium1}
\begin{aligned}
&\limsup_{\e\to 0^+} {J(t, X^e(t);\Th_1^\e(\cd)X^\e(\cd),\bar\Th_2(\cd)X^\e(\cd))
-J(t,\bar X(t);\bar\Th_1(\cd)\bar X(\cd),\bar\Th_2(\cd)\bar X(\cd))\over\e}\les 0,\\
&\limsup_{\e\to 0^+} {J(t, X^\e(t);\bar\Th_1(\cd)X^\e(\cd),\Th^\e_2(\cd)X^\e(\cd))
-J(t,\bar X(t);\bar\Th_1(\cd)\bar X(\cd),\bar\Th_2(\cd)\bar X(\cd))\over\e}\ges 0,
\end{aligned}
\ee
for  any $t\in[0,T)$ and $u_i(\cd)\in L^2_i[t,T]$ with $i=1,2$, where
\bel{def-equilibrium2}
\Th_i^\e(s)X^\e(s)\deq\left\{\2n\ba{ll}
\ds\bar\Th_i(s)X^\e(s),\q&s\in[t+\e,T];\\
\ns\ds u_i(s),\q&s\in[t,t+\e),\ea\right.\ee
with $X^\e(\cd)$ being the corresponding state process.
We call the closed-loop strategies satisfying the local Nash property \rf{def-equilibrium1}
an {\it equilibrium closed-loop saddle strategy}.
\end{defn}

 The first inequality in \reff{def-equilibrium1} implies that if player $1$ deviates from their strategy during a commitment period of negligible length, their objective function will not improve compared to following the no-deviation strategy. This constraint shares similarities with the definition of a consistent planning strategy for a single player, as stated in Definition \ref{Def: CPS}. However, the key difference is that in the game-theoretic setting, each player takes turns to deviate while keeping the behavior of both their future selves during the period $[t + \e, T]$ and the other player's strategy unchanged. Strategies that meet the criteria of the local saddle property \reff{def-equilibrium2} are denominated as equilibrium closed-loop saddle strategies. Here, ``equilibrium" pertains to the game involving multiple selves, while ``saddle" refers to the game between Player $1$ and Player $2$.

\subsection{Time partition with precommitment in each subinterval}
We first let $\Pi$ be a partition of the time interval $[0,T]$:
$$0=t_0<t_1<t_2<\cds<t_{N-1}<t_N=T,$$
with mesh size
$$\|\Pi\|=\max_{i\ges1}(t_i-t_{i-1}).$$
We assume that within each subinterval $[t_k, t_{k+1}]$, the players are allowed to make commitments and adjust their strategies accordingly, before moving on to the next subinterval. As a result, during the time interval $[t_k, t_{k+1}]$, self $t_k$ of each player will play a 2-person zero-sum game against self $t_k$ of the other player, with self $t_k$ dictating the strategy choice. To fully specify the problem, we need to define the objective for each player in each subinterval, and establish the connection between the game in a given subinterval and the game in the next subinterval. To establish this connection, we will assume consistent planning: when solving the game during a given interval $[t_{k-1}, t_{k}]$, the decision makers internalize how the game will be solved in the subsequent interval $[t_{k},T]$.  We will do this in several steps.

\ms

\noindent
\textbf{Step $1$: The precommitment game on $\pmb{[t_{N-1},t_N]}$.}
The precommitment game that we envision is driven the state equation
\bel{SDE-N}\left\{\2n\ba{ll}
\ns\ds dX^N(s)=\big[ u^N_1(s)+ u^N_2(s)\big]ds+\si X^N(s) dW(s),\qq s\in[t_{N-1},T],\\
\ns\ds X^N(t_{N-1})=\xi,\ea\right.\ee
and the functionals
\bel{cost-N}\ba{ll}
\ds J^N_1(t_{N-1},\xi;u_1(\cd),u_2(\cd))=\dbE\Big\{\a(T-t_{N-1})  X(T)^2\\
\ds\qq+\2n\int_t^T\3n\a(s-t_{N-1})\(  -u_1(s)^2 +R u_2(s)^2\)ds\Big\},\\
\ds J^N_2(t_{N-1},\xi;u_1(\cd),u_2(\cd))=
\dbE\Big\{- \a(T-t_{N-1})  X(T)^2\\
\ds\qq+\2n\int_t^T\3n\a(s-t_{N-1})\(  u_1(s)^2 -R u_2(s)^2\)ds\Big\}.\ea\ee
Notice that with precommitment, every player's future self $t$ where $t \in [t_{N-1},t_N]$ apply the same discount function that is applied by self $t_{N-1}$. As a result,  the initial time $t_{N-1}$ is fixed in equations \reff{cost-N}. With fixed $t_{N-1}$,   the problem \reff{SDE-N}--\reff{cost-N} is a standard linear quadratic game with zero discounting. As such, it can be solved Lemma \ref{lem:LQG-closed-Nash-kehua}, with $\rho=0$,
$G=\a(T-t_{N-1})$, $R_1(s)=-\a(s-t_{N-1})$ and $R_2(s)=\a(s-t_{N-1})R$.

The unique closed-loop saddle strategy $(\bar\Th_1^N(\cd), \bar\Th_2^N(\cd))$ is given by
\bel{}
\begin{aligned}
  \bar\Th^N_1(s)={ P(t_{N-1};s)\over \a(s-t_{N-1})},\qq \bar\Th^N_2(s)=-{ P(t_{N-1};s)\over R\a(s-t_{N-1})},\qq s\in[t_{N-1},T],
\end{aligned}
\ee
with $P(t_{N-1};\cd)$ being the unique solution of the following Riccati equation:
\bel{Ric-N}\left\{\begin{aligned}
  & P_s(t_{N-1};s)+ P(t_{N-1};s)\si^2 - {1-R\over \a(s-t_{N-1})R}P(t_{N-1};s)^2=0,
  \qq s\in[t_{N-1},T], \\
  & P(t_{N-1};T)=\a(T-t_{N-1}).
\end{aligned}\right.\ee
Note that in \rf{Ric-N}, $t_{N-1}$ is fixed and only works as a parameter.
Denote
\bel{Th-Pi-N}
  \bar\Th^\Pi_i(\cd)\deq\bar\Th^N_i(\cd) \hbox{ on $[t_{N-1},T]$}, \q i=1,2.
\ee

\noindent
\textbf{Steps $2$: The precommitment game on $\pmb{[t_{N-2},t_{N-1}]}$.}
With the strategy $\bar\Th^\Pi_i(\cd)$ determined on $[t_{N-1},T]$, the players on $[t_{N-2},T]$ can only control the system on $[t_{N-2},t_{N-1}]$.
Consider the state equation
\bel{SDE-(N-1)}\left\{\2n\ba{ll}
\ns\ds dX^{N-1}(s)=\big[u^{N-1}_1(s)+ u^{N-1}_2(s)\big]ds+\si X^{N-1}(s) dW(s),\qq s\in[t_{N-2},t_{N-1}],\\
\ns\ds dX^{N-1}(s)=\big[ \bar\Th_1^\Pi(s) X^{N-1}(s)+\bar\Th_2^\Pi(s) X^{N-1}(s)\big]ds+\si X^{N-1}(s) dW(s),\q s\in[t_{N-1},T],\\
\ns\ds X^{N-1}(t_{N-2})=\xi,\ea\right.\ee
and the functionals $J_i^{N-1}(t_{N-2},\xi;u^{N-1}_1(\cd),u^{N-1}_2(\cd)),i=1,2$ with
\bel{cost-(N-1)}\ba{ll}
\ns\ds J_1^{N-1}(t_{N-2},\xi;u^{N-1}_1(\cd),u^{N-1}_2(\cd))=- J_2^{N-1}(t_{N-2},\xi;u^{N-1}_1(\cd),u^{N-1}_2(\cd)) \\
\ns\ds \q=J^{N-1}(t_{N-2},\xi;u^{N-1}_1(\cd),u^{N-1}_2(\cd))=\dbE\Big\{\a(T-t_{N-2})  X^{N-1}(T)^2\\
\ns\ds\qq+\2n\int_{t_{N-1}}^T\3n\a(s-t_{N-2})\[-\big(\bar\Th_i^\Pi(s) X^{N-1}(s)\big)^2+R\big(\bar\Th_i^\Pi(s) X^{N-1}(s)\big)^2\]ds\Big\}\\
\ns\ds\qq+\dbE\Big\{\int_{t_{N-2}}^{t_{N-1}}\3n\a(s-t_{N-2})\[-u^{N-1}(s)^2+ R u_2(s)^2\]ds\Big\}\deq (I)+(II).
\ea
\ee
Let $P(t_{N-2};\cd)$ be the unique solution of the following Lyapunov equation over $[t_{N-1},T]$:
\bel{Lya-(N-1)}\left\{\begin{aligned}
  & P_s(t_{N-2};s)+2 P(t_{N-2};s)[\bar\Th^\Pi_1(s)+\bar\Th^\Pi_2(s)]+P(t_{N-2};s)\si^2 \\
  &\q+\a(s-t_{N-2})\big[-\bar\Th_1^\Pi(s)^2+R\bar\Th_2^\Pi(s)^2\big]=0, \\
  & P(t_{N-2};T)=\a(T-t_{N-2}).
\end{aligned}\right.\ee
Then by applying It\^{o} formula to $P(t_{N-2};\cd)X^{N-1}(\cd)^2$ on $[t_{N-1},T]$, we have
$$
(I)=P(t_{N-2};t_{N-1})X^{N-1}(t_{N-1})^2.
$$
It follows that
\bel{cost-(N-1)-1}\ba{ll}
\ns\ds J^{N-1}(t_{N-2},\xi;u^{N-1}_1(\cd),u^{N-1}_2(\cd))=\dbE\Big\{P(t_{N-2};t_{N-1})X^{N-1}(t_{N-1})^2\\
\ns\ds\qq+\int_{t_{N-2}}^{t_{N-1}}\3n\a(s-t_{N-2})\[-u^{N-1}(s)^2+ R u_2(s)^2\]ds\Big\}.
\ea
\ee
Moreover, from the fact
$$
-\bar\Th_1^\Pi(s)^2+R\bar\Th_2^\Pi(s)^2
={(1-R) P(t_{N-1};s)^2\over R\a(s-t_{N-1})^2}\ges 0,
$$
we have
\bel{P-(N-1)}
P(t_{N-2};t_{N-1})\ges0.
\ee
Thus, the problem can be solved  by Lemma \ref{lem:LQG-closed-Nash-kehua} again with  with $\rho=0$,
$G=P(t_{N-2};t_{N-1})$, $R_1(s)=-\a(s-t_{N-2})$ and $R_2(s)=\a(s-t_{N-2})R$.

The unique closed-loop saddle strategy $(\bar\Th_1^{N-1}(\cd), \bar\Th_2^{N-1}(\cd))$ is given by
\bel{Th-Pi-N-1-i}
\begin{aligned}
  \bar\Th^{N-1}_1(s)={ P(t_{N-2};s)\over \a(s-t_{N-2})},\qq \bar\Th^{N-1}_2(s)=-{ P(t_{N-2};s)\over R\a(s-t_{N-2})},\qq s\in[t_{N-2},t_{N-1}],
\end{aligned}
\ee
with $P(t_{N-2};\cd)$ being the unique solution of the following Riccati equation:
\bel{Ric-(N-1)}\left\{\begin{aligned}
   & P_s(t_{N-2};s)+ P(t_{N-2};s)\si^2 - {1-R\over \a(s-t_{N-2})R}P(t_{N-2};s)^2=0,
  \qq s\in[t_{N-2},t_{N-1}],\\
  & P(t_{N-2};t_{N-1})=\ P(t_{N-2};t_{N-1}).
\end{aligned}\right.\ee
Note that in the above,  $P(t_{N-2};t_{N-1})$ has been determined by \rf{Lya-(N-1)}.
We now extend $\bar\Th^\Pi(\cd)$ from $[t_{N-1},T]$ to $[t_{N-2},T]$ by
\bel{Th-Pi-N-1}
  \bar\Th^\Pi_i(s)\deq\left\{\begin{aligned}
  & \bar\Th^N_i(s),\qq s\in[t_{N-1},T];\\
  & \bar\Th^{N-1}_i(s),\qq s\in[t_{N-2},t_{N-1}),
\end{aligned}\right.\qq i=1,2.
\ee
With this, we can write \rf{Lya-(N-1)} and \rf{Th-Pi-N-1} together as
\bel{ERIC-(N-1)}\left\{\begin{aligned}
  & P_s(t_{N-2};s)+2 P(t_{N-2};s)[\bar\Th^\Pi_1(s)+\bar\Th^\Pi_2(s)]+ P(t_{N-2};s)\si^2 \\
  &\q+\a(s-t_{N-2})\big[-\bar\Th_1^\Pi(s)^2+R\bar\Th_2^\Pi(s)^2\big]=0, \qq s\in[t_{N-2},T],\\
  & P(t_{N-2};T)=\a(T-t_{N-2}).
\end{aligned}\right.\ee

\ms

\noindent
\textbf{Subsequent Steps: The precommitment game on $\pmb{[t_{k-1},t_{k}]}$ for any  $\pmb{k=1,2,...,N}$.}
Suppose $\bar\Th^\Pi_i(s),i=1,2$ has been constructed on $[t_k,T]\times\dbR^n$, (for some $k=1,2,\cds,N-1$). We apply the above strategy extension procedure to obtain an extension $\bar\Th^\Pi_i(\cd):[t_{k-1},T]\to\dbR$ of  $\bar\Th^\Pi_i(s),s\in[t_{k},T]$ by the following steps:
\bel{Th-Pi-N-1}
  \bar\Th^\Pi_i(s)\deq\left\{\begin{aligned}
  &\bar\Th^\Pi_i(s),\qq s\in[t_k,T];\\
  & \bar\Th^{k}_i(s),\qq s\in[t_{k-1},t_{k}),
\end{aligned}\right.\qq i=1,2.
\ee
where
\bel{}
\begin{aligned}
  \bar\Th^{k}_1(s)={ P(t_{k-1};s)\over \a(s-t_{k-1})},\qq \bar\Th^{k}_2(s)=-{ P(t_{k-1};s)\over R\a(s-t_{k-1})},\qq s\in[t_{k-1},t_{k}],
\end{aligned}
\ee
with $P(t_{k-1};\cd)$ being the unique solution of the following Riccati equation:
\bel{Lya-k}\left\{\begin{aligned}
  & P_s(t_{k-1};s)+2 P(t_{k-1};s)[\bar\Th^\Pi_1(s)+\bar\Th^\Pi_2(s)]+P(t_{k-1};s)\si^2, \\
  &\q+\a(s-t_{k-1})\big[-\bar\Th_1^\Pi(s)^2+R\bar\Th_2^\Pi(s)^2\big]=0, \qq s\in[t_k,T];\\
   & P_s(t_{k-1};s)+ P(t_{k-1};s)\si^2 - {1-R\over \a(s-t_{k-1})R}P(t_{k-1};s)^2=0,
  \qq s\in[t_{k-1},t_{k}],\\
  & P(t_{k-1};T)=\a(T-t_{k-1}),
\end{aligned}\right.\ee
which is equivalent to
\bel{Lya-k1}\left\{\begin{aligned}
  & P_s(t_{k-1};s)+2 P(t_{k-1};s)[\bar\Th^\Pi_1(s)+\bar\Th^\Pi_2(s)]+P(t_{k-1};s)\si^2, \\
  &\q+\a(s-t_{k-1})\big[-\bar\Th_1^\Pi(s)^2+R\bar\Th_2^\Pi(s)^2\big]=0, \qq s\in[t_{k-1},t_{k}],\\
  & P(t_{k-1};T)=\a(T-t_{k-1}).
\end{aligned}\right.\ee

\ms

This completes the induction.

\ms

The novelty of the construction of a sequence of precomitment zero sum games, is that we can show under the assumption $0<R\les 1$, the zero-sum stochastic linear quadratic game on every subinterval can be solved. The key step is then to check whether we can connect the Riccati equations associated with each subinterval's two persons zero sum game with constant discount, and then glue them together. This is what we did in the above steps.

\ms

We now describe the global policy constructed from the above steps. Denote the discount function of the precommitment model by
$$
\a^\Pi(t,s)=\left\{ \begin{aligned}
&\a(s-t_{N-1}),\qq t\in[t_{N-1},t_N],\q s\in[t,T],\\
&\a(s-t_{k-1}),\qq t\in[t_{k-1},t_k),\q s\in[t,T],\q k=1,...,N-1.
\end{aligned}\right.
$$
The above discount function formalises the assumption that self $k-1$ applies her discount function to all selves in the interval $[t_{k-1}, t_k]$. We denote by $P^{\Pi}$ the function obtained by gluing the functions $P$ on each subinterval:
$$
P^\Pi(t,s)= P(s-t_{k-1}),\qq t\in[t_{k-1},t_k),\q s\in[t,T],\q k=1,...,N.
$$
Then the closed-loop strategy $(\bar\Th_1^\Pi(\cd),\bar\Th_2^\Pi(\cd))$
associated with the partition $\Pi$ can be given by
\bel{ERIC-Theta}
  \bar\Th^\Pi_1(s)= { P^\Pi(s;s)\over \a^\Pi(s,s)},\qq \bar\Th^\Pi_2(s)=-{ P^\Pi(s;s)\over R\a^\Pi(s,s)},
\qq s\in[0,T],
\ee
with
\bel{ERIC-Pi}\left\{\begin{aligned}
  & P^\Pi_s(t,s)+2 P^\Pi(t,s)[\bar\Th^\Pi_1(s)+\bar\Th^\Pi_2(s)] +P^\Pi(t,s)\si^2 \\
  &\q+\a^\Pi(t,s)\big[-\bar\Th_1^\Pi(s)^2+R\bar\Th_2^\Pi(s)^2\big]=0, \qq (t,s)\in\D[0,T],\\
  & P^\Pi(t,T)=\a^\Pi(T-t),\qq t\in[0,T].
\end{aligned}\right.\ee

\ms

The purpose of our precommitment strategy based on subintervals constructed in the above steps is to find an equilibrium characterization in the continuous time model. To accomplish this, we pretend that $P^\Pi(\cd,\cd)$ uniformly converges to some $P(\cd,\cd)$.
Then, formally, $P(\cd,\cd)$ should satisfy the following equation:
\bel{ERIC}\left\{\begin{aligned}
& P_s(t,s)+2 P(t,s)[\bar\Th_1(s)+\bar\Th_2(s)] +P(t,s)\si^2 \\
&\q+\a(t,s)\big[-\bar\Th_1(s)^2+R\bar\Th_2(s)^2\big]=0,\qq (t,s)\in\D[0,T],\\
&P(t,T)=\a(T-t),\qq t\in[0,T].
\end{aligned}\right.\ee
where
\bel{Th}
\bar\Th_1(s)= P(s,s),\qq \bar\Th_2(s)=-{P(s,s)\over R},\qq s\in[0,T].
\ee
We call \rf{ERIC} an equilibrium Riccati equation (ERE, for short).

\subsection{Well-posedness and verification theorem}

We show in the next theorem that the solution to the ERE \rf{ERIC} exists and is unique.

\begin{thm}{\bf Existence and uniqueness of the ERE.}\label{thn:well-ERE}
The ERE \rf{ERIC} admits a unique solution $P(\cd,\cd)\in C(\D[0,T])$.
Moreover, $P(t,s)\ges 0,\, (t,s)\in\D[0,T]$. In particular, if $R=1$, that is the game is symmetric,
then the unique solution of ERE \rf{ERIC} can be explicitly given by
\bel{thn:well-ERE1}
P(t,s)=e^{-\si^2(s-T)}\a(T-t),  \qq (t,s)\in\D[t,T].
\ee
\end{thm}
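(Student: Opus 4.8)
The plan is to collapse the nonlocal equation \rf{ERIC} into a scalar fixed–point equation for its diagonal trace $\phi(t):=P(t,t)$, solve that equation by a continuation argument, and then recover $P$ from $\phi$. Writing $\mu:=\frac{1-R}{R}\ges0$ and substituting $\bar\Th_1(s)=\phi(s)$, $\bar\Th_2(s)=-\phi(s)/R$ in \rf{ERIC}, the equation becomes, for each \emph{fixed} $t$, a \emph{linear} first–order ODE in $s\mapsto P(t,s)$ on $[t,T]$,
\[
P_s(t,s)=-\big[\si^2-2\mu\phi(s)\big]P(t,s)-\mu\,\a(s-t)\,\phi(s)^2,\qquad P(t,T)=\a(T-t),
\]
whose coefficient and forcing depend on the unknown only through $\phi$. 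Integrating backward from $s=T$ gives the representation
\[
P(t,s)=e^{\int_s^T[\si^2-2\mu\phi(\t)]d\t}\,\a(T-t)+\mu\int_s^T e^{\int_s^r[\si^2-2\mu\phi(\t)]d\t}\,\a(r-t)\,\phi(r)^2\,dr ,
\]
and the compatibility condition $P(t,t)=\phi(t)$ turns into the nonlinear equation $\phi=\Gamma\phi$ on $C([0,T])$, where $\Gamma\phi$ is the right–hand side above evaluated at $s=t$. Conversely, any continuous fixed point of $\Gamma$, reinserted into the representation, produces a $P\in C(\D[0,T])$; differentiating in $s$ and using $P(s,s)=\Gamma\phi(s)=\phi(s)$ shows it solves \rf{ERIC}. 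Two facts will be used throughout: $\Gamma\phi\ges0$ for every $\phi$ (because $\a>0$ and $\mu\ges0$), so every fixed point is automatically nonnegative; and $\Gamma\phi(T)=\a(0)=1$, so $\phi(T)=1$ for every fixed point.

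Next I would prove an a priori bound and globalize. Since $\Gamma\phi$ is $C^1$, a fixed point satisfies $\phi'(t)=-\si^2\phi(t)+\mu\phi(t)^2+N(t)$, where
\[
N(t)=-\Big[e^{\int_t^T[\si^2-2\mu\phi]}\,\a'(T-t)+\mu\int_t^T e^{\int_t^r[\si^2-2\mu\phi]}\,\a'(r-t)\,\phi(r)^2\,dr\Big]\ges0,
\]
the sign being precisely where the monotonicity $\a'\les0$ of the discount function enters. Hence whenever $\phi(t)>\si^2/\mu$ one has $\phi'(t)>0$, so if $\phi$ ever exceeded $\max\{1,\si^2/\mu\}$ it would remain above $\si^2/\mu$ and increase thereafter, giving $\phi(T)>1$ and contradicting $\phi(T)=1$; with $\phi\ges0$ this yields $0\les\phi\les C_0:=\max\{1,\si^2R/(1-R)\}$, a bound independent of the interval. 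Local existence and uniqueness of the fixed point on $[T-\d,T]$ follow from the contraction mapping principle for $\d$ small. For globalization, the affine set of functions with a prescribed restriction to $[\t,T]$ is invariant under $\Gamma$, since $(\Gamma\phi)(t)$ for $t\ges\t$ depends only on $\phi|_{[t,T]}$; thus a solution on $[\t,T]$ extends to $[\t-\d,T]$ by a further contraction in which the new values on $[\t-\d,\t]$ enter $\Gamma$ only through integrals of length $O(\d)$, so the junction value $\phi(\t)$ is essentially preserved. Because $C_0$ does not depend on $\t$, the step size $\d$ stays bounded below, and finitely many steps reach $[0,T]$.

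Uniqueness: if $\phi,\psi$ are solutions on $[0,T]$, both bounded by $C_0$, then from $\phi-\psi=\Gamma\phi-\Gamma\psi$, the estimates $|e^a-e^b|\les e^{a\vee b}|a-b|$ and $|\phi^2-\psi^2|\les 2C_0|\phi-\psi|$ give $|\phi(t)-\psi(t)|\les C_1\int_t^T|\phi-\psi|\,dr$ with $C_1=C_1(T,\si,R)$, so backward Gr\"onwall forces $\phi\equiv\psi$, and then $P$ is unique via the representation; nonnegativity of $P$ on $\D[0,T]$ is immediate from the representation. When $R=1$ we have $\mu=0$, $\Gamma$ is the constant map $\Gamma\phi(t)=e^{\si^2(T-t)}\a(T-t)$, hence $\phi(t)=e^{\si^2(T-t)}\a(T-t)$ and $P(t,s)=e^{-\si^2(s-T)}\a(T-t)$, which is \rf{thn:well-ERE1}. (An alternative route to existence is to show the partition approximations $P^\Pi$ of \rf{ERIC-Pi} are uniformly bounded and Cauchy as $\|\Pi\|\to0$; I prefer the diagonal argument because it delivers uniqueness at the same time.)

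The step I expect to be the main obstacle is the passage from the local solution near $s=T$ to a global one: the diagonal equation $\phi=\Gamma\phi$ is quadratically nonlinear, so $\Gamma$ need not map balls of $C([0,T])$ into themselves and there is no one–shot contraction; everything hinges on the a priori bound of the second paragraph. The only structural input there, beyond the standing assumption $0<R\les1$, is $\a'\les0$: this is what converts the quadratic term $\mu\phi^2$ into a barrier preventing blow–up rather than a driver of it. With that bound in hand, the remaining pieces — local well–posedness, the continuation step, uniqueness, and the verification that the reconstructed $P$ solves \rf{ERIC} — are routine contraction–and–Gr\"onwall arguments.
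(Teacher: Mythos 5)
Your proposal is correct and takes essentially the same route as the paper: both reduce \rf{ERIC} to a scalar Volterra-type equation for the diagonal $P(t,t)$ via the variation-of-constants representation, derive a priori nonnegativity and an upper bound using exactly the two structural inputs you identify ($\alpha'\les 0$ and $0<R\les 1$), globalize, and then reconstruct $P(t,s)$ and verify it solves \rf{ERIC}. The only differences are technical and immaterial: the paper bounds the diagonal by ODE comparison with the Lyapunov solution $e^{\si^2(T-s)}$ and obtains global existence by truncating the quadratic nonlinearity rather than by your barrier bound $\max\{1,\si^2R/(1-R)\}$ and stepwise continuation, but both variants work.
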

Next, we shall show that the closed-loop strategy $\bar \Th_i(\cd),i=1,2$ obtained by
\rf{ERIC}--\rf{Th} satisfies the  local Nash property and is therefore an equilibrium closed-loop saddle strategy.

\begin{thm}{\bf Verification.} \label{thm:VT}
Let  $\bar \Th_i(\cd),i=1,2$ be the closed-loop strategy obtained by
\rf{ERIC}--\rf{Th}. Then $\bar \Th_i(\cd),i=1,2$ satisfies the local saddle property. Thus  $\bar \Th_i(\cd),i=1,2$ is an equilibrium closed-loop saddle strategy for Problem (G) with non constant discounting.
\end{thm}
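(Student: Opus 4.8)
The plan is a spike-variation verification combined with a completion-of-squares argument, using that once the diagonal feedbacks $\bar\Th_1(s)=P(s,s)$ and $\bar\Th_2(s)=-P(s,s)/R$ are frozen, \rf{ERIC} is, for each fixed $t$, a \emph{linear} ODE for $s\mapsto P(t,s)$ with terminal value $\a(T-t)$. Throughout, $P$ denotes the unique solution of \rf{ERIC} supplied by Theorem~\ref{thn:well-ERE} (continuous on $\D[0,T]$, $C^1$ in the second variable), and $\bar\Th_i\in L^\infty(0,T)$ since $P$ is bounded on $\D[0,T]$. The starting point is a basic identity: if on an interval $[r,T]$ both players use $(\bar\Th_1,\bar\Th_2)$, producing a state $Y(\cd)$ with $Y(r)\in L^2_{\cF_r}(\O)$, then
\[
J\big(t,Y(r);\bar\Th_1 Y,\bar\Th_2 Y\big)=\dbE\big[P(t,r)Y(r)^2\big].
\]
Indeed, applying It\^o's formula to $s\mapsto P(t,s)Y(s)^2$ on $[r,T]$, the drift of $d\big(P(t,s)Y(s)^2\big)$ equals $\a(s-t)\big[\bar\Th_1(s)^2-R\bar\Th_2(s)^2\big]Y(s)^2\,ds$ by \rf{ERIC}, which is exactly the negative of the running-cost integrand on $[r,T]$; together with $P(t,T)=\a(T-t)$ everything telescopes to $\dbE[P(t,r)Y(r)^2]$. (Each stochastic integral is a genuine martingale after localizing along $\tau_n=\inf\{s:|Y(s)|\ge n\}$ and letting $n\to\i$, using $\dbE[\sup_{[r,T]}|Y|^2]<\i$.) Taking $r=t$, $Y=\bar X$ gives the no-deviation value $J(t,\bar X(t);\bar\Th_1\bar X,\bar\Th_2\bar X)=P(t,t)\dbE[\xi^2]$.

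\textbf{First inequality.} Fix $t\in[0,T)$ and a deviation $u_1(\cd)\in L^2_\dbF(t,T)$; let $X^\e$ be the perturbed state of \rf{def-equilibrium2} with Player~2 using $\bar\Th_2$ throughout, so $X^\e(t)=\xi$. Since on $[t+\e,T]$ both players are back on $(\bar\Th_1,\bar\Th_2)$, the basic identity with $r=t+\e$ converts the $[t+\e,T]$ portion of $J(t,X^\e(t);\Th^\e_1 X^\e,\bar\Th_2 X^\e)$ into $\dbE[P(t,t+\e)X^\e(t+\e)^2]$ and cancels the $[t+\e,T]$ running cost, leaving $\dbE[P(t,t+\e)X^\e(t+\e)^2]+\dbE\int_t^{t+\e}\a(s-t)\big[-u_1^2+R\bar\Th_2^2 (X^\e)^2\big]ds$. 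Then I would expand $\dbE[P(t,t+\e)X^\e(t+\e)^2]-P(t,t)\dbE[\xi^2]$ by It\^o on $[t,t+\e]$ (drift $u_1+\bar\Th_2 X^\e$), substitute $P_s(t,s)+P(t,s)\si^2+2P(t,s)\bar\Th_2(s)$ from \rf{ERIC}, cancel the $R\bar\Th_2^2(X^\e)^2$ terms, and complete the square in $u_1$; the net integrand in the difference $J(t,X^\e(t);\Th^\e_1 X^\e,\bar\Th_2 X^\e)-P(t,t)\dbE[\xi^2]$ becomes
\[
-\a(s-t)\Big(u_1(s)-\tfrac{P(t,s)}{\a(s-t)}X^\e(s)\Big)^2+\a(s-t)\Big(\bar\Th_1(s)-\tfrac{P(t,s)}{\a(s-t)}\Big)^2 X^\e(s)^2 .
\]
The first term is $\le 0$, so it contributes nonpositively after dividing by $\e$. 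For the second term, the decisive point is that the coefficient $\big(\bar\Th_1(s)-P(t,s)/\a(s-t)\big)^2$ \emph{vanishes at $s=t$} (because $\bar\Th_1(t)=P(t,t)$ and $\a(0)=1$), hence is $o(1)$ uniformly on $[t,t+\e]$; together with $\sup_{s\in[t,t+\e]}\dbE[X^\e(s)^2]\le C\big(\dbE[\xi^2]+\|u_1\|_{L^2}^2\big)$ this term contributes $o(\e)$. Dividing by $\e$ and letting $\e\to0^+$ gives the first inequality in \rf{def-equilibrium1}.

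\textbf{Second inequality.} This is the mirror image: Player~1 is frozen at $\bar\Th_1$ and $u_2(\cd)\in L^2_\dbF(t,T)$ is the deviation, so $u_1=\bar\Th_1 X^\e$ on $[t,t+\e)$. Running the same computation, now substituting $P_s(t,s)+P(t,s)\si^2+2P(t,s)\bar\Th_1(s)$ from \rf{ERIC}, cancelling the $\a(s-t)\bar\Th_1^2(X^\e)^2$ terms, and completing the square in $u_2$, the net integrand becomes
\[
R\,\a(s-t)\Big(u_2(s)+\tfrac{P(t,s)}{R\,\a(s-t)}X^\e(s)\Big)^2-R\,\a(s-t)\Big(\bar\Th_2(s)+\tfrac{P(t,s)}{R\,\a(s-t)}\Big)^2 X^\e(s)^2 ,
\]
whose first term is $\ge 0$ and whose second coefficient again vanishes at $s=t$ (now because $\bar\Th_2(t)=-P(t,t)/R$), hence is $o(\e)$ after integration. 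Therefore $\liminf_{\e\to0^+}\e^{-1}\big(J(t,X^\e(t);\bar\Th_1 X^\e,\Th^\e_2 X^\e)-P(t,t)\dbE[\xi^2]\big)\ge 0$, which yields the second inequality since $\limsup\ge\liminf$. Combining the two gives the local saddle property, so by definition $\bar\Th_i(\cd)$, $i=1,2$, is an equilibrium closed-loop saddle strategy for Problem~(G) with non-constant discounting.

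\textbf{Main obstacle.} All the It\^o bookkeeping, the cancellation of the $[t+\e,T]$ cost, and the moment estimates are routine; the one genuinely delicate point is controlling the ``wrong-sign'' quadratic term $\a(s-t)\big(\bar\Th_1(s)-P(t,s)/\a(s-t)\big)^2 X^\e(s)^2$ (and its Player-2 analogue). Its negligibility to first order in $\e$ is exactly the content of the self-consistency encoded in \rf{ERIC}--\rf{Th} — that the feedbacks $\bar\Th_i$ are read off the \emph{diagonal} $s\mapsto P(s,s)$ of the same solution $P$ — which forces the coefficient to vanish precisely on $\{s=t\}$; this is where the specific structure of the ERE (rather than an arbitrary Riccati-type equation) is used.
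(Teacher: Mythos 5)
Your proof is correct, and it takes a genuinely different route from the paper's. Both arguments begin the same way: use It\^o's formula and \rf{ERIC} to collapse the $[t+\e,T]$ portion of the perturbed cost into the terminal term $\dbE[P(t,t+\e)X^\e(t+\e)^2]$, and identify the no-deviation value as $P(t,t)\bar X(t)^2$. From there the paper solves an \emph{exact} auxiliary zero-sum game on $[t,t+\e]$ with terminal weight $P(t,t+\e)$ via Lemma \ref{lem:LQG-closed-Nash-kehua}, obtaining a local Riccati solution $P^t(\cd)$ and saddle feedbacks $\Th_i^{*,\e}$; it then needs two perturbation estimates — $|P^t(t)-P(t,t)|\les K\e^2$ and an SDE-stability bound comparing the state $X^{1,\e}$ driven by $\Th_2^{*,\e}=-P^t/R$ with the state driven by $\bar\Th_2=-P(\cd,\cd)/R$ — before transferring the exact saddle inequality to the true perturbed functional. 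You bypass the auxiliary game entirely: a single It\^o expansion of $P(t,\cd)X^\e(\cd)^2$ on $[t,t+\e]$ plus completion of squares isolates a signed square in the deviation and a wrong-sign remainder whose coefficient $\a(s-t)\big(\bar\Th_1(s)-P(t,s)/\a(s-t)\big)^2$ (resp.\ its Player-2 analogue) vanishes on the diagonal by \rf{Th} and $\a(0)=1$, hence is $o(1)$ uniformly on $[t,t+\e]$ by continuity of $P$ on $\D[0,T]$ and contributes $o(\e)$. This is more self-contained — it does not invoke the solvability of the local Riccati equation \rf{Ric-e} or the comparison of two different closed-loop state processes — at the price of making the diagonal self-consistency of \rf{ERIC}--\rf{Th} carry the whole weight of the estimate, exactly as you note. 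One small remark: your Player-2 computation actually yields $\liminf\ges 0$, which is stronger than the $\limsup\ges 0$ required by \rf{def-equilibrium1}, so the conclusion follows a fortiori. The moment bound $\sup_{s\in[t,t+\e]}\dbE[X^\e(s)^2]\les C(\dbE[\xi^2]+\|u_1\|^2_{L^2})$ with $C$ uniform in $\e$ and the localization of the stochastic integrals are routine, as you say.
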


\subsection{Convergence and approximation algorithm}

We now show that when the partition step is small, the precommitment strategy converges to the equilibrium closed-loop saddle strategy. This is an important result because it builds the equilibrium as a limit of models with precommitment over arbitrarily smaller periods. The result provides thus a discrete time foundation of the spike variation method.

\begin{thm}\label{thm:convergence}
As $\|\Pi\|\to 0$, the sequences $\{P^\Pi(\cd,\cd)\}_{\Pi}$
and $\{\bar\Th_i^\Pi(\cd,\cd)\}_{\Pi}$ defined by \reff{ERIC-Theta} and \reff{ERIC-Pi} uniformly converge to $P(\cd,\cd)$ and $\bar\Th(\cd)$, respectively where $P(\cd,\cd)$ and $\bar\Th(\cd)$ are defined by \reff{ERIC} and \reff{Th}.
\end{thm}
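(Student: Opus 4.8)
The plan is to obtain the convergence from four ingredients: uniform (in the partition $\Pi$) a~priori bounds on $\{P^\Pi\}$, compactness, passage to the limit in an integrated form of \eqref{ERIC-Pi}, and the uniqueness part of Theorem~\ref{thn:well-ERE}. \emph{Step 1: uniform bounds.} First I would prove that there is a constant $C=C(\si,T,R)$, independent of $\Pi$, such that $0\le P^\Pi(t,s)\le C$ on $\D[0,T]$. Nonnegativity is already delivered by the construction (cf.\ \eqref{P-(N-1)} together with the identity $-\bar\Th_1^\Pi(s)^2+R\bar\Th_2^\Pi(s)^2=\tfrac{1-R}{R}\big(P^\Pi(s,s)/\alpha^\Pi(s,s)\big)^2\ge0$, valid because $0<R\le1$). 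For the upper bound one uses that, viewed as an ODE in $s$ on $[t,T]$, equation \eqref{ERIC-Pi} reduces on each current subinterval to the logistic-type Riccati $P_s+\si^2P-\tfrac{1-R}{\alpha R}P^2=0$ of \eqref{Ric-N}, whose backward flow is dissipative, while the gluing across subintervals is governed by the Lyapunov step \eqref{Lya-k} whose forcing is nonnegative and bounded by $\mathrm{const}\cdot\sup P^\Pi$; combined with $\alpha(T)\le\alpha^\Pi\le1$ and a Gronwall estimate run over all of $[0,T]$ at once (rather than step by step), this closes the bound. Because also $\alpha^\Pi(s,s)\ge\alpha(\|\Pi\|)\ge\alpha(T)>0$, \eqref{ERIC-Theta} then yields $\|\bar\Th_i^\Pi\|_{L^\infty(0,T)}\le C'$ uniformly in $\Pi$, and substituting back into \eqref{ERIC-Pi} gives a uniform Lipschitz bound for $s\mapsto P^\Pi(t,s)$.

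\emph{Step 2: compactness.} From Step~1 and continuous dependence of the solution of \eqref{ERIC-Pi} on its data, $t\mapsto P^\Pi(t,s)$ is constant on each slab $[t_{k-1},t_k)$ and changes by $O(\|\Pi\|)$ across grid points, so $\{P^\Pi\}$ is uniformly bounded and equicontinuous on $\D[0,T]$. By the Arzel\`a--Ascoli theorem, along any sequence with $\|\Pi_n\|\to0$ there is a subsequence (not relabelled) and $Q\in C(\D[0,T])$ with $P^{\Pi_n}\to Q$ uniformly; since $\alpha^{\Pi_n}(s,s)\to\alpha(0)=1$ uniformly, \eqref{ERIC-Theta} gives $\bar\Th_1^{\Pi_n}(s)\to Q(s,s)$ and $\bar\Th_2^{\Pi_n}(s)\to-Q(s,s)/R$ uniformly on $[0,T]$.

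\emph{Steps 3--4: identification and conclusion.} Integrating \eqref{ERIC-Pi} in $s$ from $s$ to $T$ expresses $P^{\Pi_n}(t,\cdot)$ as $\alpha^{\Pi_n}(t,T)$ plus the $[s,T]$-integral of $P^{\Pi_n}(t,r)\big[2(\bar\Th_1^{\Pi_n}+\bar\Th_2^{\Pi_n})(r)+\si^2\big]+\alpha^{\Pi_n}(t,r)\big[-\bar\Th_1^{\Pi_n}(r)^2+R\bar\Th_2^{\Pi_n}(r)^2\big]$, which is legitimate since $P^{\Pi_n}(t,\cdot)$ is continuous and piecewise $C^1$. By uniform continuity of $\alpha$ on $[0,T]$ and $|t-t_{k-1}|\le\|\Pi_n\|$ one has $\alpha^{\Pi_n}(t,\cdot)\to\alpha(\cdot-t)$ and $\alpha^{\Pi_n}(t,T)\to\alpha(T-t)$ uniformly; combining this with the uniform convergences of Step~2 and the uniform bounds of Step~1 permits passing to the limit under the integral, so that $Q$ solves the integrated, hence the differential, form of \eqref{ERIC}--\eqref{Th}. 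By the uniqueness statement in Theorem~\ref{thn:well-ERE}, $Q=P$; as every subsequence admits a further subsequence with the same limit, the whole family converges: $P^\Pi\to P$ uniformly, and then \eqref{ERIC-Theta}, $\alpha^\Pi(s,s)\to1$ and \eqref{Th} give $\bar\Th_i^\Pi\to\bar\Th_i$ uniformly, which is the assertion.

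\emph{Main obstacle.} The crux is the uniform-in-$\Pi$ bound of Step~1: a~priori the Riccati nonlinearity of \eqref{Ric-N} and the Lyapunov steps \eqref{Lya-k} could allow $\|P^\Pi\|$ to grow as the mesh shrinks and the number of glued pieces tends to infinity, and the coefficients $\bar\Th_i^\Pi$ are themselves built from the diagonal values $P^\Pi(s,s)$, so the estimate must be closed against this feedback. This is precisely where the standing assumption $0<R\le1$ does the work --- it makes each Riccati piece dissipative in backward time and each Lyapunov forcing term nonnegative and at most $\mathrm{const}\cdot\sup P^\Pi$ --- together with the uniform two-sided control $\alpha(T)\le\alpha^\Pi\le1$. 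An alternative to the Arzel\`a--Ascoli route would be to show directly that $\|P^\Pi-P^{\Pi'}\|_{C(\D[0,T])}\le C(\|\Pi\|+\|\Pi'\|)$ by a Gronwall comparison of the two glued equations, which again rests on the same a~priori bounds.
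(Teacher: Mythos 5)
Your overall architecture is sound and genuinely different from the paper's: you go through Arzel\`a--Ascoli compactness, identification of the limit in the integrated equation, and the uniqueness part of Theorem~\ref{thn:well-ERE}, whereas the paper, having already secured existence and uniqueness of $P$, compares $P^\Pi$ with $P$ directly through a backward Gronwall inequality for $g(s)=\sup_{t\les s}|P(t,s)-P^\Pi(t,s)|$ and obtains the quantitative rate $|P-P^\Pi|\les K\|\Pi\|$. Your soft route loses the rate but would, as a by-product, reprove existence; given your Step~1, your Steps~2--4 are fine (the $t$-direction equicontinuity works because each jump across a grid point is $O(t_k-t_{k-1})$, so the jumps over an interval of length $\d$ sum to $O(\d+\|\Pi\|)$).

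The genuine gap is in Step~1, which you rightly single out as the crux but do not actually close. The Lyapunov forcing in \rf{Lya-k} is $\a(r-t_{k-1})\big[-\bar\Th_1^\Pi(r)^2+R\bar\Th_2^\Pi(r)^2\big]=\frac{1-R}{R}\,\frac{\a(r-t_{k-1})}{\a^\Pi(r,r)^2}\,P^\Pi(r,r)^2$, which is \emph{quadratic} in the diagonal values, not ``bounded by $\mathrm{const}\cdot\sup P^\Pi$''; moreover its nonnegativity is the \emph{unfavorable} sign for an upper bound, since integrating backward in $s$ it pushes $P(t_{k-1};\cd)$ up. A Gronwall estimate with this forcing only yields $\sup P^\Pi\les C\big(1+(\sup P^\Pi)^2\big)$, which does not close, and a slab-by-slab induction lets the constant deteriorate with the number $N$ of slabs, so it is not uniform as $\|\Pi\|\to0$. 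The missing ingredient is the monotonicity $P^\Pi(t,s)\les P^\Pi(s,s)$ for $t\les s$ (inequality \rf{Con1} in the paper): $P^\Pi(t,\cd)$ and $P^\Pi(t',\cd)$ solve the \emph{same linear} ODE in $s$ with coefficient $2(\bar\Th_1^\Pi+\bar\Th_2^\Pi)+\si^2$, and both the terminal datum and the nonnegative forcing are nondecreasing in $t$ because $\a$ is decreasing, so linear comparison gives the monotonicity. This reduces the entire bound to the diagonal, where only the dissipative Riccati pieces \rf{Ric-N} act, and backward induction over the slabs then yields $P^\Pi(t,s)\les\Xi(s)=e^{\si^2(T-s)}$ with $\Xi$ from \rf{Wp1}. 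With this lemma inserted, both your compactness route and your alternative Cauchy-type estimate $\|P^\Pi-P^{\Pi'}\|\les C(\|\Pi\|+\|\Pi'\|)$ go through.
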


It should be noted that the explicit solution to the non-local ODE \rf{ERIC} cannot be obtained in general. However, for a given partition $\Pi$, we can obtain an explicit solution for $P^\Pi(\cd,\cd)$. By applying Theorem \ref{thm:convergence}, we can use ${P^\Pi(\cd,\cd)}$ to approximate the solution of \rf{ERIC}. This will be our approach going forward.

 \ms

To start, let $\Pi$ be a partition of the time interval $[0,T]$, with $t_0=0$, $t_1={T\over N}$, $t_2={2T\over N}$,..., $t_{N-1}={(N-1)T\over N}$, $t_N=T$.
Then $\|\Pi\|=\max_{i\ges1}(t_i-t_{i-1})={1\over N}$. We now describe the algorithm.

\ms
\noindent
\textbf{Step $1$: Approximation on $\D[t_{N-1},t_N]$:} Let
\bel{}
P(t_{N-1};s)={1\over {1\over e^{\si^2 (T-s)}\a(T-t_{N-1})}+\int_s^T {1-R\over  e^{\si^2( r-s)}\a(r-t_{N-1})R}dr},\qq  s\in[t_{N-1},T].
\ee
Denote
\bel{}
\begin{aligned}
&P^\Pi(t,s)=P(t_{N-1};s),\qq  (t,s)\in\D[t_{N-1},T],\\
&\Th_1^\Pi(s)={ P(t_{N-1};s)\over \a(s-t_{N-1})},\qq \Th^\Pi_2(s)=-{ P(t_{N-1};s)\over R\a(s-t_{N-1})},\qq s\in[t_{N-1},t_N].
\end{aligned}
\ee

\ms
\noindent
\textbf{Step $2$: Approximation formula of general term on $\D[t_{k},t_N]$ with $k=0,1,...,N-2$:} Let $P^\Pi(\cd,\cd)$ have been determined on $\D[t_{k+1},T]$
and  $\Th^\Pi(\cd)$ have been determined on $[t_{k+1},T]$.
Let
\bel{}\begin{aligned}
&P(t_{k};s)=\a(T-t_{k})e^{\int_s^T [2\Th^\Pi_1(\t)+2 \Th^\Pi_2(\t)+\si^2]d\t}\\
&\q+\int_s^T e^{\int_s^r [2\Th^\Pi_1(\t)+2 \Th^\Pi_2(\t)+\si^2]d\t}\a(r-t_{k})\big[-\Th_1^\Pi(r)^2+R\Th_2^\Pi(r)^2\big]dr,\qq s\in[t_{k+1},t_N];\\
& P(t_{k};s)={1\over {1\over e^{\si^2( t_{k+1}-s)} P(t_{k};t_{k+1})}+\int_s^{t_{k+1}} {1-R\over e^{\si^2 (r-s)} \a(r-t_{k})R}dr},\qq s\in[t_{k},t_{k+1}].
\end{aligned}
\ee
Denote
\bel{}
\begin{aligned}
&P^\Pi(t,s)=P(t_{k};s),\qq  (t,s)\in\D[t_{k},T]\setminus\D[t_{k+1},T],\\
&\Th^\Pi_1(s)= {P(t_{k};s)\over \a(s-t_{k})},\qq \Th^\Pi_2(s)=-{ P(t_{k};s)\over R\a(s-t_{k})},\qq  s\in[t_{k},t_{k+1}].
\end{aligned}
\ee

\ms

For any $N>0$, $P^\Pi(\cd,\cd)$ and $\Th^\Pi(\cd)$ can be explicitly obtained on $[0,T]$ by induction.
Then the unique solution $P(\cd,\cd)$ of \rf{RE-TIC} can be obtained for $s\in[0,T]$ by
\bel{}
P(t,s)=\lim_{\|\Pi\|\to 0}P^\Pi(t,s),\qq.
\ee
and the equilibrium strategy $\bar\Th_i(\cd,\cd)$ can be obtained by
\bel{eq: approximation of bar Theta}
\bar\Th_1(s)=\lim_{\|\Pi\|\to 0}\Th_1^\Pi(s)=\lim_{\|\Pi\|\to 0}P^\Pi(s,s),
\qq \bar\Th_2(s)=\lim_{\|\Pi\|\to 0}\Th_2^\Pi(s)=-\lim_{\|\Pi\|\to 0}{P^\Pi(s,s)\over R}.
\ee

\ms
Figure \ref{Fig: 6} displays the approximate strategies $\bar\Th_2(s)$ obtained using the above algorithm to visualize the equilibrium closed-loop saddle strategy. The qualitative characteristics of the lobbying strategies are broadly consistent with those of a two-player game with constant discounting, as shown in Figure \ref{Fig: 6}. We compare the equilibrium closed-loop saddle lobbying strategy $\bar\Th_2(s)$ to the equilibrium lobbying strategy $\ti \Th_2(s)$ for a single player with non constant discounting by plotting them together in Figure \ref{Fig: 7}. The figure clearly demonstrates that lobbying intensifies in a two-player game setting, indicating that strategic interaction between the players leads to increased lobbying efforts, even when the discount rate is non-constant. This general observation holds true across all panels in Figure \ref{Fig: 7}, producing results similar to those obtained when the discount rate is constant and summarized in Figure \ref{Fig: 5}. Figure \ref{Fig: 8} further confirms the absence of "overshooting" in the equilibrium, closed-loop Nash lobbying strategy with non-constant discount remains bounded by the closed-loop Nash lobbying strategies produced under the assumption of constant discounting. This holds true for both scenarios, when the short-term self is in control and when the long-term self is in control.

\begin{figure}[h!]
  \includegraphics[width=1.0\textwidth]{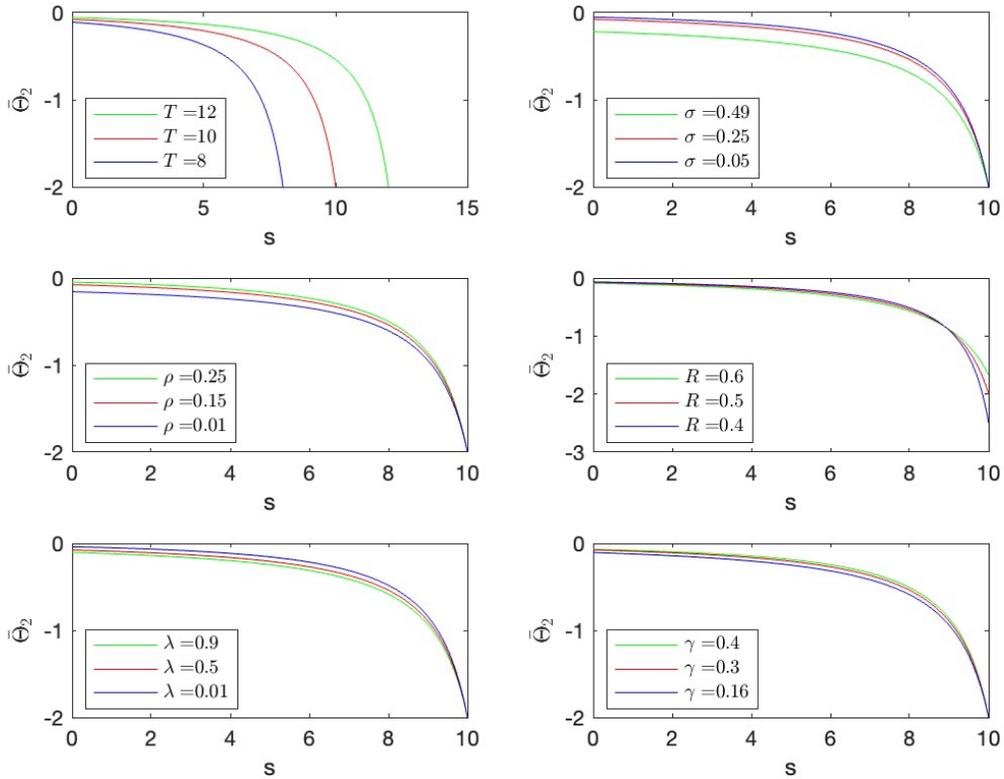}
   \caption{The figure panels display the equilibrium closed-loop saddle strategies $\bar{\Theta}_2(s)$ for Problem (G) with non-constant discounting, plotted for $0 \les s \les T$ using the approximation given by equation \reff{eq: approximation of bar Theta}. We use the baseline parameter values of $T = 10$, $\sigma = 0.25$, $\rho = 0.15$, $R = 0.5$, $\lambda = 0.5$, and $\gamma = 0.3$.
  } \label{Fig: 6}
\end{figure}

\begin{figure}[h!]
  \includegraphics[width=1.0\textwidth]{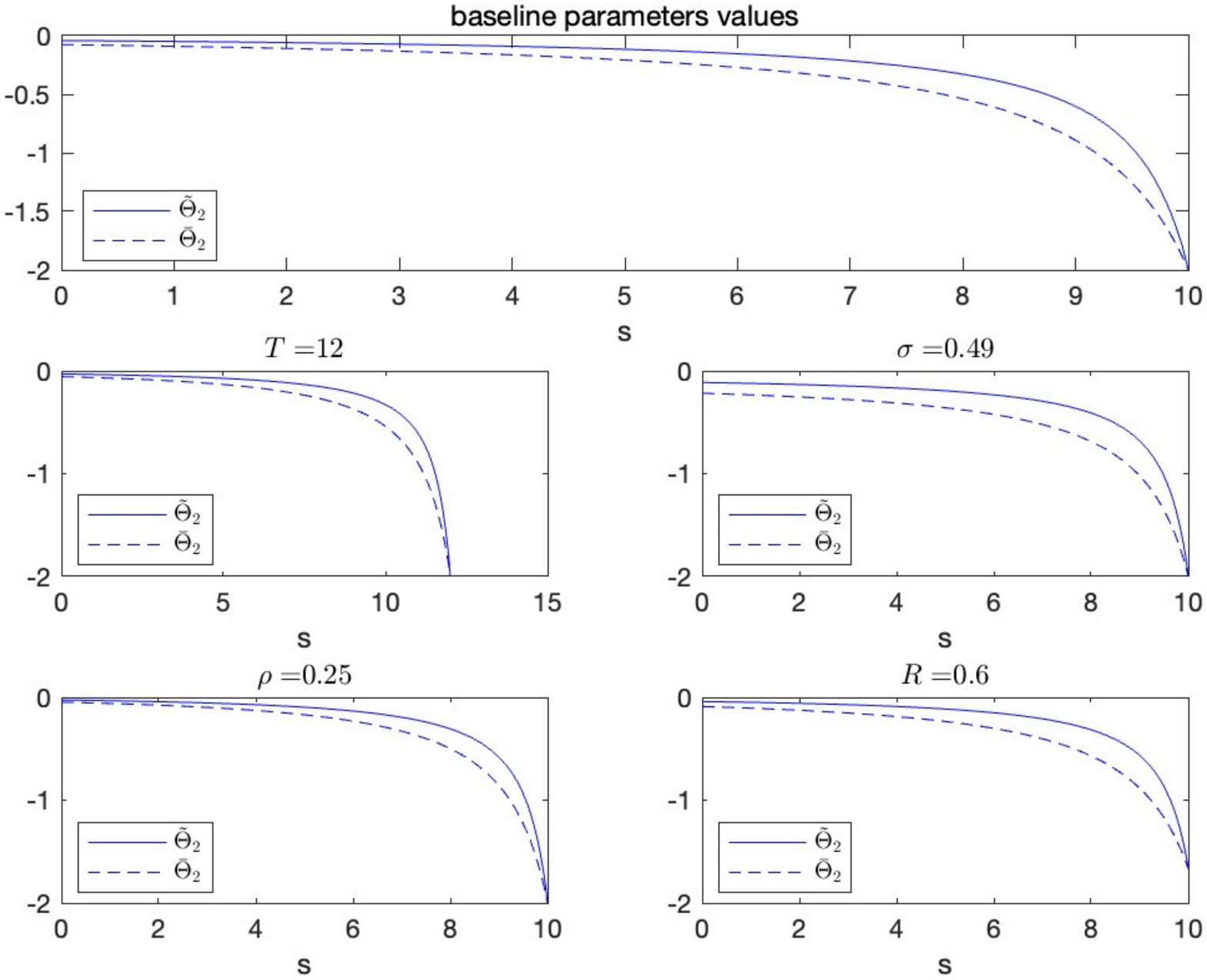}
   \caption{The figure panels display the equilibrium closed-loop saddle strategies $\bar{\Theta}_2(s)$ for Problem (G) with non-constant discounting, as well as the equilibrium closed-loop strategies $\tilde{\Theta}_2(s)$ for the single player with non-constant discounting, plotted for $0 \les s \les T$. The figures are based on the approximations given by equations \reff{eq: approximation of bar Theta} and \reff{eq: approximation single player}. We use the baseline parameter values of $T = 10$, $\sigma = 0.25$, $\rho = 0.15$, $R = 0.5$, $\lambda = 0.5$, and $\gamma = 0.3$.
  } \label{Fig: 7}
\end{figure}

\begin{figure}[h!]
  \includegraphics[width=.85\textwidth]{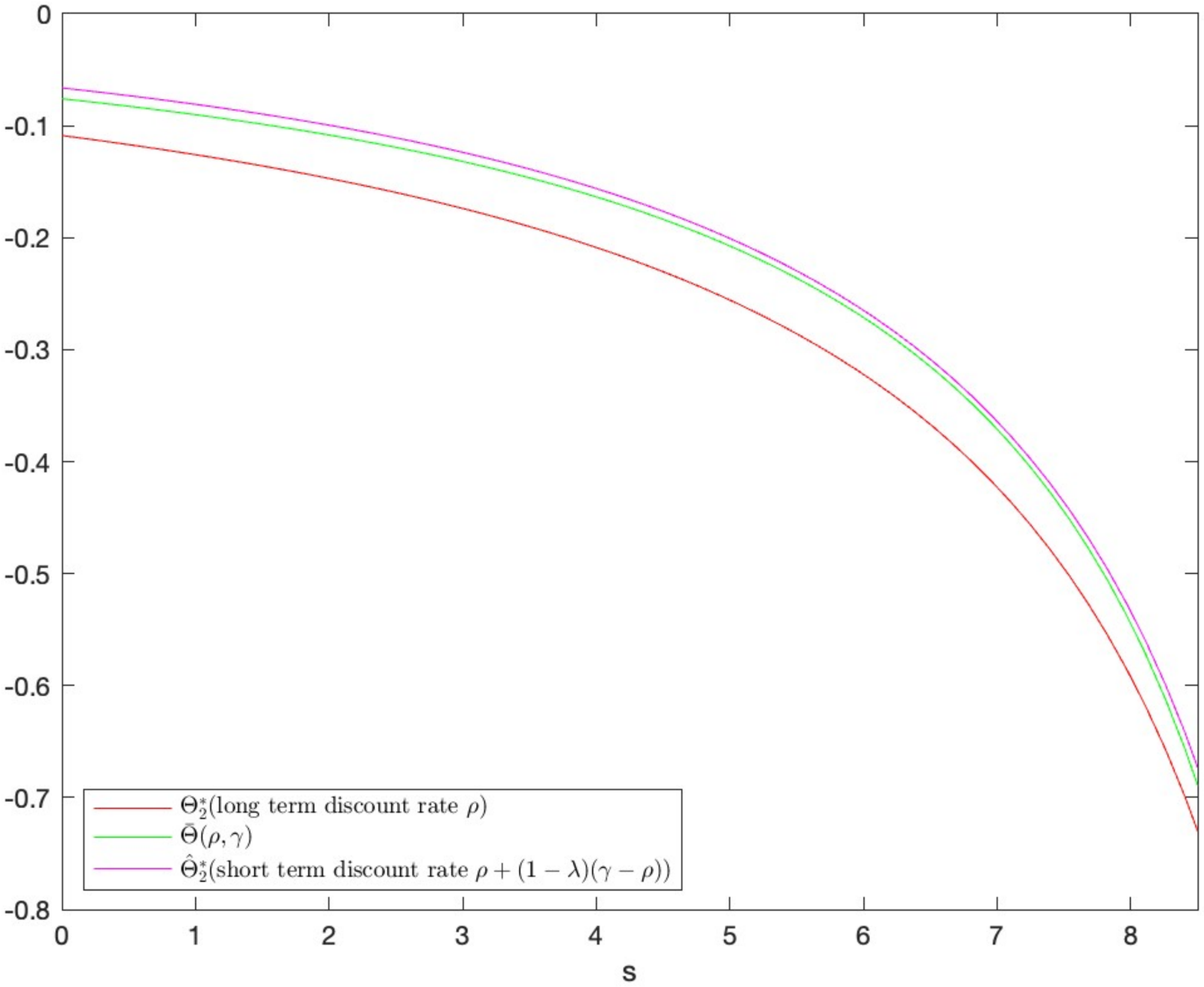}
   \caption{The figure panels display the equilibrium saddle strategies $\bar{\Theta}_2(s)$ for Problem (G) with non-constant discounting, and the closed-loop saddle strategies $\Theta_2^*(s)$ for Problem (G) with constant discounting, plotted for $0 \les s \les T$. These strategies  $\bar{\Theta}_2(s)$ are approximated using the expression \reff{eq: approximation of bar Theta} and the strategies $\Theta_2^*(s)$ are given in the closed-form expression \reff{eq: Nash for Problem N}. We use the baseline parameter values $T = 10$, $\sigma = 0.25$, $\rho = 0.15$, $R = 0.5$, $\lambda = 0.5$, and $\gamma = 0.3$. To visualize the wedge between curves, we truncated the x-axis at $s=8.5$. We have verified numerically that there is no overshooting in the interval $s \in [8.5,10]$.
  } \label{Fig: 8}
\end{figure}

\newpage

\subsection{Conclusion}
We investigate a linear quadratic stochastic zero-sum game in which two players lobby a political representative to invest in a wind turbine farm. As the players have time-inconsistent preferences, they discount short-term performance with a large discount rate and long-term performance with a low discount rate. Our aim is to determine the equilibrium lobbying behavior of the players in both single-player and two-player frameworks.

We find that the equilibrium behavior takes the form of a closed-loop control, which is provided in closed-form or in a form that can be approximated. Our analysis reveals that non-constant discounting does not significantly alter the static comparative with respect to most model parameters, except for the cost of lobbying parameter. We show that strategic behavior can change the monotonicity of the equilibrium lobbying with respect to that parameter. In the single-player framework, a larger cost of lobbying always implies less intense lobbying, while in a two players game it can imply more intense lobbying in some parameter regions.

Our study has revealed that, despite the aforementioned findings, strategic behavior consistently leads to increased lobbying intensity in all situations. Additionally, we have shown that there is no overshooting phenomenon. Specifically, we have found that the equilibrium behavior under non-constant discounting is constrained from below by lobbying behavior when players prioritize long-term performance at a high discount rate, and from above by lobbying behavior when players prioritize short-term performance at a low discount rate. These results suggest that strategic behavior consistently leads to heightened lobbying intensity, while the degree of discounting impacts the upper and lower bounds of lobbying behavior.

We conducted an initial study on the interplay between strategic behavior of two players and Stackelberg behavior of multiple selves induced by non-constant discounting in a zero-sum game. To facilitate a broader understanding of the important topic of dynamic games and time-inconsistency, we intentionally simplified the model and focused on specific cases for our analysis. Additionally, we discussed the questions in the context of lobbying and employed a basic reduced form political economy model to maintain tractability and the possibility of closed-form solutions. However, there are numerous opportunities for expansion, including more extensive modeling of political economy elements. To the best of our knowledge, this paper is the first to apply stochastic differential dynamic games methods in political analysis and future extensions could provide further insights into the dynamics of lobbying and other key interest concepts  that shape the landscape of political economy..

It's worth noting that our framework's zero-sum property reduces the equilibrium description to one Ricatti equation. However, in the interesting case where the players have varying discount functions, the zero-sum assumption falls apart. Similarly, from the vantage point of the lobbying model, this assumption also curtails the full spectrum of externalities one lobbyist might exert on another. Thus, future studies can explore ways to overcome this limitation and expand the applicability of our approach to a broader range of non-zero sum differential games. If possible, such a result would be a significant advance because it will allow to analyze strategic interactions  between players with various degree of behavioral biases.

\section{Appendix}

\subsection{Proof of Proposition \ref{eq: single player optimal control}}

By \cite{yong1999stochastic}, the  Riccati equation associated with the above problem reads
\bel{}\label{eq: Riccati1}\left\{\begin{aligned}
  & \dot{P}(s)+\si^2 P(s)-\rho P(s)-{P(s)^2\over R}=0, \\
  & P(T)=1.
\end{aligned}\right.\ee
characterizes the optimal control associated to the objective \rf{LQC2:cost} and the state \rf{LQC2-SDE}. Solving the equation \reff{eq: Riccati1} gives
\bel{}
P(s)={R(\rho-\si^2) e^{(\rho-\si^2) s}\over [1+R(\rho-\si^2) ]e^{(\rho-\si^2)  T}-e^{(\rho-\si^2)  s}},\qq s\in[0,T].
\ee
Then the unique optimal closed-loop strategy is given by
\bel{}
\hat \Th_2(s)=-{P(s)\over R}=-{(\rho-\si^2)  e^{(\rho-\si^2)  s}\over (1+R(\rho-\si^2) )e^{(\rho-\si^2)  T}-e^{(\rho-\si^2)  s}},\qq s\in[0,T].
\ee
The rest can be obtained easily. \qed

\subsection{Proof of Theorem 1}
Denote the discount function of the precommitment model by
$$
\a^\Pi(t,s)=\left\{ \begin{aligned}
&\a(s-t_{N-1}),\qq t\in[t_{N-1},t_N],\q s\in[t,T],\\
&\a(s-t_{k-1}),\qq t\in[t_{k-1},t_k),\q s\in[t,T],\q k=1,...,N-1.
\end{aligned}\right.
$$
Note that the function $P^\Pi(\cd,\cd)$ defined by \rf{Algorithm step 4} satisfies the following equation:

\bel{ERIC11-Pi}\left\{\begin{aligned}
  & P^\Pi_s(t,s)+2 P^\Pi(t,s)\Th^\Pi_2(s) +P^\Pi(t,s)\si^2 \\
  &\q+\a^\Pi(t,s)R\Th_2^\Pi(s)^2=0, \qq (t,s)\in\D^*[0,T],\\
  & P^\Pi(t,T)=\a^\Pi(T-t),\qq t\in[0,T].
\end{aligned}\right.\ee
From the fact that $\a(s-t)\les\a(s-t^\prime)$ for any $0\les t\les t^\prime\les T$ and $s\in[t^\prime, T]$, we have
\bel{Con1-11}
 P^\Pi(t,s)\les P^\Pi(s,s),\qq 0\les t\les s\les T.
\ee
Let $\Xi(\cd)$ be the unique solution to the following Lyapunov equation:
\bel{Wp11}
\dot{\Xi}(s)+\Xi(s)\si^2=0,\qq \Xi(T)=1.
\ee
From \rf{Algorithm step 1}, it is easily checked that
$$
 P^\Pi(t,s)=P(t_{N-1};s)\les \Xi(s)=e^{\si^2 (T-s)},\qq t_{N-1}\les t\les s\les T.
$$
Then, by \rf{Con1-11}, we have
\bel{Con2-11}
 P^\Pi(t,s)\les \Xi(s),\qq 0\les t\les T,\q t\vee t_{N-1}\les s\les T.
\ee
In particular,
$$P^\Pi(t,t_{N-1})\les\Xi(t_{N-1}),\qq 0\les t\les t_{N-1}.$$
From \rf{Algorithm step 3}, we have
$$
 \begin{aligned}
 P^\Pi(t,s)=P(t_{N-2};s)\les  e^{\si^2 (t_{N-1}-s)}P(t_{N-2};t_{N-1})\les  e^{\si^2 (t_{N-1}-s)}\Xi(t_{N-1})= \Xi(s),
 \\ t_{N-2}\les t< t_{N-1}, \q t\les s\les t_{N-1}.
\end{aligned}
$$
It then follows from \rf{Con1} that
\bel{Con3-11}
 P^\Pi(t,s)\les \Xi(s),\qq 0\les t< t_{N-1},\q t\vee t_{N-2}\les s\les t_{N-1}.
\ee
Thus, by induction, we have
\bel{}
 P^\Pi(t,s)\les \Xi(s),\qq 0\les t\les T,\q t\vee t_{N-2}\les s\les T.
\ee
Continuing the above together yields that
\bel{}
 P^\Pi(t,s)\les \Xi(s),\qq 0\les t\les s\les T.
\ee
Thus, noting $P^\Pi(t,s)\ges 0$, $P^\Pi(\cd,\cd)$ is uniformly bounded.
Then from  \rf{RE-TIC} and \rf{ERIC11-Pi}, we have
$$
|P(t,s)-P^\Pi(t,s)|\les K\|\Pi\|+\int_s^T\[|P(t,r)-P^\Pi(t,r)|+|P(r,r)-P^\Pi(r,r)|\]dr,
$$
which implies that
$$
|P(t,s)-P^\Pi(t,s)|\les K\|\Pi\|.
$$

\subsection{Proof of Proposition \ref{thm:LQG-closed-Nash-kehua}}

Using the fact $G=1$, $R_1(\cd)=-1$ and $R_2(\cd)=R$, we can rewrite \rf{Ric} as follows:
\bel{Ric-1}\left\{\begin{aligned}
  & \dot{P}(s)+\si^2 P(s)-\rho P(s) -{1-R\over R}P(s)^2=0, \\
  & P(T)=1,
\end{aligned}\right.\ee
Note that ${1-R\over R}>0$. Then by \cite{yong1999stochastic},
the ODE \rf{Ric-1} admits a unique solution $P\in C([0,T])$.
Indeed, \rf{Ric-1} can be solved explicitly, with the solution given by
\bel{}
P(s)={R(\rho-\si^2) e^{(\rho-\si^2) s}\over [1-R+R(\rho-\si^2) ]e^{(\rho-\si^2)  T}-(1-R)e^{(\rho-\si^2)  s}},\qq s\in[0,T].
\ee
Substituting the above into \rf{Th^*}, we get the desired results immediately. \qed

\subsection{Proof of Theorem \ref{thn:well-ERE}}
By the variation of constants formula, we can see that  ERE \rf{ERIC} is  essentially a Volterra integral equation.
Notice that the map $t\mapsto P(t,s)$ (if exists) is differentiable.
Then, to prove Theorem \rf{thn:well-ERE}, we introduce  the following Volterra differential-integral equation  (VDIE, for short):
\bel{VDI}\left\{
\begin{aligned}
&\dot{\G}(t)+\G(t)\si^2-{1-R\over R}\G(t)^2 -\int_t^T e^{\int_t^s [\si^2+2{R-1\over R}] \G(r)dr}\partial_t\a(s-t){1-R\over R}\G(s)^2ds\\
&\q-e^{\int_t^T [\si^2+2{R-1\over R}] \G(r)dr}\partial_t\a(T-t)=0,\\
&\G(T)=1.
\end{aligned}\right.
\ee
If \rf{VDI} has a solution, it is easily checked that
$$\begin{aligned}
&{1-R\over R}\G(t)^2 +\int_t^T e^{\int_t^s [\si^2+2{R-1\over R}] \G(r)dr}\partial_t\a(s-t){1-R\over R}\G(s)^2ds\\
&\q+e^{\int_t^T [\si^2+2{R-1\over R}] \G(r)dr}\partial_t\a(T-t)\ges 0,
\end{aligned}$$
Thus, by the comparison theorem of ordinary differential equations (ODEs, for short),
we have the following prior estimate:
\bel{PE-1}
\G(s)\les\Xi(s),\qq s\in[0,T],
\ee
where $\Xi(\cd)$ is the unique solution to the following Lyapunov equation:
\bel{Wp1}
\dot{\Xi}(s)+\Xi(s)\si^2=0,\qq \Xi(T)=1.
\ee

Next, we give a  prior lower bound estimate for $\G(\cd)$.
First, \rf{VDI} can be rewritten as
$$\left\{
\begin{aligned}
&\dot{\G}(t)+[\si^2-2{1-R\over R}\G(t)]\G(t) +{1-R\over R}\G(t)^2\\
&\q -\int_t^T e^{\int_t^s [\si^2-2{1-R\over R}\G(r)]dr}\partial_t\a(s-t){1-R\over R}\G(s)^2ds\\
&\q-e^{\int_t^T [\si^2-2{1-R\over R}\G(r)]dr}\partial_t\a(T-t)=0,\\
&\G(T)=1.
\end{aligned}\right.
$$
Denote
$$
\begin{aligned}
\h\G(t)&\deq\int_t^T e^{\int_t^s [\si^2-2{1-R\over R}\G(r)]dr}\a(s-t){1-R\over R}\G(s)^2ds\\
&\q+e^{\int_t^T [\si^2-2{1-R\over R}\G(r)]dr}\a(T-t),\qq t\in[0,T].
\end{aligned}
$$
Clearly,
$$
\h\G(t)\ges0,\qq t\in[0,T].
$$
Note that
$$\left\{
\begin{aligned}
&\dot{\h\G}(t)+[\si^2-2{1-R\over R}\G(t)]\h\G(t) +{1-R\over R}\G(t)^2\\
&\q -\int_t^T e^{\int_t^s [\si^2-2{1-R\over R}\G(r)]dr}\partial_t\a(s-t){1-R\over R}\G(s)^2ds\\
&\q-e^{\int_t^T [\si^2-2{1-R\over R}\G(r)]dr}\partial_t\a(T-t)=0,\\
&\h\G(T)=1.
\end{aligned}\right.
$$
Then by the uniqueness of the solution to the above linear equation with unknown $\h\G(\cd)$, we have
$$
\G(t)=\h\G(t),\qq t\in[0,T].
$$
It follows that
$$
\G(t)\ges0,\qq t\in[0,T].
$$
Combining this with \rf{PE-1}, we get the following prior estimate:
\bel{PE}
0\les\G(s)\les\Xi(s),\qq s\in[0,T].
\ee
With the above, we can prove the well-posedness of  \rf{VDI}.
\begin{lem}\label{Well-DIV}
 The VDIE \rf{VDI} admits a unique solution $\G(\cd)\in C([0,T])$.
Moreover, $\G(t)\ges 0,\, t\in[0,T]$.
\end{lem}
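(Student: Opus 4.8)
The a priori bound \rf{PE} derived above holds for an arbitrary $C^1$ solution of the VDIE \rf{VDI}, so the task reduces to a local existence and uniqueness statement followed by a routine continuation argument. The plan is first to recast \rf{VDI} in its backward integral form
\bee
\G(t)=1+\int_t^T\Big(\si^2\G(\tau)-\frac{1-R}{R}\G(\tau)^2-H[\G](\tau)\Big)d\tau,\qq t\in[0,T],
\eee
where $H[\G](\tau)$ denotes the sum of the two nonlocal terms in \rf{VDI}, namely the $\partial_t\a$–weighted double integral and the exponential terminal term. Since the discount function $\a(\cd)$ is a finite combination of exponentials, $\partial_t\a$ is bounded and $C^\infty$, and the kernels $\exp\big(\int_t^s[\si^2+2\frac{R-1}{R}]\G(r)\,dr\big)$ are smooth and bounded on bounded subsets of $C([0,T])$; hence $\G\mapsto H[\G]$, and therefore the whole right-hand side, is a Volterra-type operator that is Lipschitz on every ball of $C([0,T])$.

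Second, I would prove local solvability near $t=T$ by a contraction mapping argument: on $C([T-\delta,T])$ with $\delta$ small, the operator defined by the right-hand side above maps a suitable closed ball into itself and is a strict contraction, giving a unique solution on $[T-\delta,T]$; alternatively one may work on $C([0,T])$ with the weighted norm $\|f\|_\lambda=\sup_{t}e^{-\lambda(T-t)}|f(t)|$ and obtain longer intervals directly. Third, I would globalize. Let $(t^\ast,T]$ be the maximal interval of existence. On it the a priori estimate \rf{PE} gives $0\les\G(s)\les\Xi(s)\les e^{\si^2 T}$, so $\G$ is uniformly bounded; the integral identity then shows $\dot\G$ is uniformly bounded on $(t^\ast,T]$, whence $\G$ extends to a $C^1$ function on $[t^\ast,T]$ still satisfying \rf{PE}. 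If $t^\ast>0$, restarting the local argument at $t^\ast$ (with the already–constructed tail $\G|_{[t^\ast,T]}$ feeding into $H$) contradicts maximality, so $t^\ast=0$ and a solution exists on all of $[0,T]$. Global uniqueness follows in the same spirit: any two solutions satisfy \rf{PE}, hence take values in the compact interval $[0,e^{\si^2 T}]$ on which all the nonlinearities (including the exponential kernels and their differences) are genuinely Lipschitz; subtracting the two integral identities and running a Volterra–Gronwall estimate in the weighted norm $\|\cd\|_\lambda$ forces the difference to vanish. The nonnegativity $\G(t)\ges0$ is already part of \rf{PE}.

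The one genuinely delicate point is that $H$ contains two nested integrals over $[t,T]$ with $\G$ appearing inside an exponential, so a naive Gronwall argument does not close. I would deal with this exactly as in the fixed-point step: the uniform bound \rf{PE} makes the exponential kernels, and the map $\G\mapsto H[\G]$, globally Lipschitz with an explicit constant, and then carrying out both the existence and the uniqueness estimates in a Bielecki-type weighted norm $\|\cd\|_\lambda$ with $\lambda$ chosen large absorbs both the ``local'' integration of the ODE and the extra $\int_{\cd}^{T}$ coming from $H$, turning the Volterra operator into a contraction and bypassing any iterated Gronwall inequality.
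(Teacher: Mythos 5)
Your argument is correct, but it takes a genuinely different route from the paper's. The paper also reduces everything to the a priori estimate \rf{PE}, but instead of local existence plus continuation it uses a truncation device: it replaces $\G^2$ by $\rho_M(\G)^2$ with a smooth cutoff $\rho_M$ at level $M+1$ (where $M=\sup_t|\Xi(t)|$), observes that the truncated VDIE is globally Lipschitz and hence uniquely solvable on all of $[0,T]$ by standard theory, and then shows via \rf{PE} that the truncated solution never reaches level $M+1$, so the cutoff is never active and $\G^M$ solves the original equation; uniqueness is dismissed as ``standard.'' Your local contraction near $t=T$, maximal-interval continuation (using \rf{PE} to rule out blowup and the boundedness of $\dot\G$ to extend to the endpoint), and Bielecki-norm uniqueness estimate achieve the same thing, and in fact supply the uniqueness details the paper omits; your observation that the backward Volterra structure of $H$ is compatible with the weighted norm $\|f\|_\lambda=\sup_t e^{-\lambda(T-t)}|f(t)|$ is the right way to absorb the nested integrals. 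The truncation approach buys a shorter write-up (no maximal-interval bookkeeping, no restart step feeding the constructed tail back into $H$), while yours is more self-contained and makes explicit where the a priori bound enters uniqueness — namely, that any two solutions live in the compact set $\{0\les\G\les\Xi\}$ on which the quadratic and exponential nonlinearities are genuinely Lipschitz. One minor caveat: your parenthetical suggestion that the weighted norm alone yields the solution on all of $[0,T]$ ``directly'' is optimistic, since the Lipschitz constant of the nonlinearity is only ball-dependent; but your main line of argument does not rely on this, so it is not a gap.
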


\begin{proof} {\bf of Lemma \ref{Well-DIV}}
The uniqueness of solutions to VDIE \rf{VDI} can be obtained by a standard method,
and we only prove the existence of a solution.
Let $M\deq \sup_{t\in[0,T]}|\Xi(t)|$. Let $\rho_M(\cd)$ be a smooth truncation function with
$$\rho_M(x)=\left\{
\begin{aligned}
&x,\qq |x|\les M+1;\\
&0,\qq |x|\ges M+2.
\end{aligned}\right.
$$
Let $\G^M(\cd)$ be the unique solution to the following differential-integral equation, which satisfies the uniformly Lipschitz condition:
\bel{VDI-M}\left\{
\begin{aligned}
&\dot{\G}^M(t)+\G^M(t)\si^2-{1-R\over R}\rho_M(\G^M(t))^2 \\
&\q-\int_t^T e^{\int_t^s [\si^2+2{R-1\over R}] \rho_M(\G^M(r))dr}\partial_t\a(s-t){1-R\over R}\rho_M(\G^M(s))^2ds\\
&\q-e^{\int_t^T [\si^2+2{R-1\over R}]\rho_M( \G(r))dr}\partial_t\a(T-t)=0,\\
&\G^M(T)=1.
\end{aligned}\right.
\ee
 Denote $\t=\max\{t\in[0,T] \,|\, |\G^M(t)|\ges M+1 \}$, and $\t=0$,
 if $\{t\in[0,T] \,|\, |\G^M(t)|\ges M+1 \}=\emptyset$. Note that $|\G^M(T)|=1\les  \sup_{t\in[0,T]}|\Xi(t)|<M+1$.
 Thus, $\t<T$. If $\t=0$, $\rho_M(\G^M(\cd))=\G^M(\cd)$, which implies that $\G^M(\cd)$ is a solution to \rf{VDI}.
 If $\t>0$, then $|\G^M(\t)|=M+1$, and $\G^M(\cd)$ is a solution to \rf{VDI} on $[\t,T]$.
 However, by the prior estimate \rf{PE}, we have $|\G^M(t)|\les M,\, t\in[\t,T]$,
 which contradicts with $|\G^M(\t)|=M+1$.
 Thus, $\t=0$, and the proof is complete. \qed
\end{proof}

\ms

\noindent
\emph{\textbf{Complete the proof of  Theorem \ref{thn:well-ERE}}}.
We still only prove the existence of a solution here.
Denote
\bel{well-ERE-proof1}
\begin{aligned}
P(t,s)&\deq\int_s^T e^{\int_s^\t [\si^2-2{1-R\over R}\G(r)]dr}\a(\t-t){1-R\over R}\G(\t)^2d\t\\
&\q+e^{\int_s^T [\si^2-2{1-R\over R}\G(r)]dr}\a(T-t),\qq 0\les t\les s\les T.
\end{aligned}
\ee
Then
\bel{well-ERE-proof2}
P(t,t)=\G(t),\qq t\in[0,T].
\ee
and
\bel{well-ERE-proof3}
\left\{\begin{aligned}
  & P_s(t,s)+P(t,s)\big[\si^2-2{1-R\over R}\G(s)\big]+\a(s-t){1-R\over R}\G(s)^2=0, \\
  &\qq\qq\qq\qq\qq\qq\qq\qq\qq\qq 0\les t\les s\les T,\\
  & P(t,T)=\a(T-t), \qq t\in[0,T].
\end{aligned}\right.\ee
Combining \rf{well-ERE-proof2} and \rf{well-ERE-proof3} together,
we can see that the function $P(\cd,\cd)$, defined by  \rf{well-ERE-proof1},
is a solution to \rf{ERIC}.
If $R=1$, then \rf{ERIC} reads
\bel{}\left\{\begin{aligned}
& P_s(t,s)+P(t,s)\si^2 =0, \qq (t,s)\in\D[0,T],\\
&P(t,T)=\a(T-t).
\end{aligned}\right.\ee
Taking $t$ as a parameter, by  the variation
of constants formula, we get  \rf{ERIC} immediately.
This completes the proof.
$\hfill\qed$

\subsection{Proof of Theorem \ref{thm:VT}}

We only prove the first inequality in \rf{def-equilibrium1} holds, and the second one can be obtained similarly.
For any fixed $t\in[0,T)$, consider the following classical game problem over $[t,t+\e]$ with the state
\bel{SDE-e-game}\left\{\2n\ba{ll}
\ns\ds dX^\e(s)=\big[ u_1(s)+ u_2(s)\big]ds+\si X^\e(s)dW(s),\qq s\in[t,t+\e],\\
\ns\ds X^\e(t)=\bar X(t).\ea\right.\ee
and the functionals
\bel{}\begin{aligned}
&J_1(t,\bar X(t);u_1(\cd),u_2(\cd))|_{[t,t+\e]}\deq J_1(t,\bar X(t);\Th_1^\e(\cd)X^\e(\cd),\Th^\e_2(\cd)X^\e(\cd))\\
&\q=\dbE_t\Big\{P(t,t+\e)X^\e(t+\e)^2+\int_t^{t+\e}\a(s-t)\(  - u_1(s)^2+R u_2(s)^2\)ds\Big\},\\
&J_2(t,\bar X(t);u_1(\cd),u_2(\cd))|_{[t,t+\e]}\deq J_2(t,\bar X(t);\Th_1^\e(\cd)X^\e(\cd),\Th^\e_2(\cd)X^\e(\cd))\\
&\q=\dbE_t\Big\{-P(t,t+\e)X^\e(t+\e)^2+\int_t^{t+\e}\a(s-t)\(  u_1(s)^2-R u_2(s)^2\)ds\Big\},
\end{aligned}\ee
where $\Th_i^\e(\cd)$ is defined by \rf{def-equilibrium2}.
Then by Proposition \ref{thm:LQG-closed-Nash-kehua}, the unique closed-loop Nash strategy is given by
\bel{}
\Th_1^{*,\e}(s)=P^t(s),\qq \Th_w^{*,\e}(s)=-{P^t(s)\over R},\qq s\in[t,t+\e],
\ee
where $P^t(\cd)$ is the unique solution to the following Riccati equation:
\bel{Ric-e}\left\{\begin{aligned}
  & \dot{P}^t(s)+P^t(s)\si^2 - {1-R\over \a(s-t)R}P^t(s)^2=0,\qq s\in[t,t+\e],\\
  & P^t(t+\e)= P(t,t+\e).
\end{aligned}\right.\ee
In other words, for any $(u_1(\cd),u_2(\cd))\in L^2(t,t+\e) \times L^2(t,t+\e)$,
 the following holds:
\bel{Veri-Th-proof5}\ba{ll}
\ns\ds J(t,\bar X(t);\Th^{*,\e}_1(\cd) X^{*,\e},\Th^{*,\e}_2(\cd) X^{*,\e})|_{[t,t+\e]}\ges J(t,\bar X(t); u_1(\cd),\Th^{*,\e}_2(\cd) X^{1,\e} (\cd))|_{[t,t+\e]}, \\
\ns\ds  J(t,\bar X(t);\Th^{*,\e}_1(\cd) X^{*,\e},\Th^{*,\e}_2(\cd) X^{*,\e}) |_{[t,t+\e]}\les J(t,\bar X(t);\Th^{*,\e}_1(\cd) X^{2,\e} (\cd),u_2(\cd))|_{[t,t+\e]},\ea\ee
where
\bel{}\left\{\2n\ba{ll}
\ns\ds dX^{*,\e}(s)=\big[P^t(s)X^{*,\e}(s))-R^{-1} P^t(s)X^{*,\e}(s)\big]ds+\si X^{*,\e}(s) dW(s),\\
\ns\ds X^{*,\e}(t)=\bar X(t),\ea\right.\ee

\bel{Veri-Th-proof7}\left\{\2n\ba{ll}
\ns\ds dX^{1,\e}(s)=\big[ u_1(s)-R^{-1} P^t(s)X^{1,\e}(s)\big]ds+\si X^{1,\e}(s) dW(s),\\
\ns\ds X^{1,\e}(t)=\bar X(t),\ea\right. \ee
and
\bel{Veri-Th-proof7-1}\left\{\2n\ba{ll}
\ns\ds dX^{2,\e}(s)=\big[ P^t(s)X^{2,\e}(s)+u^2(s)\big]ds+\si X^{2,\e}(s) dW(s),\\
\ns\ds X^{2,\e}(t)=\bar X(t).\ea\right. \ee
Note that
\bel{Veri-Th-proof1}
J(t,\bar X(t);\Th^{*,\e}_1(\cd) X^{*,\e},\Th^{*,\e}_2(\cd) X^{*,\e})|_{[t,t+\e]}=P^t(t)\bar X(t)^2,
\ee
and
\bel{Veri-Th-proof2}
J(t,\bar X(t);\bar\Th_1(\cd)\bar X(\cd),\bar\Th_2(\cd)\bar X(\cd))=P(t,t)\bar X(t)^2.
\ee
Clearly,
\bel{Veri-Th-proof8}
\begin{aligned}
|P^t(s)-P(t,s)|&\les |P^t(s)-P(t,t+\e)|+|P(t,t+\e)-P(t,s)|\les K\e,\\
|P^t(s)-P(s,s)|&\les |P^t(s)-P(t,s)|+|P(t,s)-P(s,s)|\les K\e.
\end{aligned}
\ee
Then from \rf{Ric-e} and \rf{ERIC}, we get
\bel{Veri-Th-proof3}
|P^t(t)-P(t,t)|\les K\int_t^{t+\e} \big[|P^t(s)-P(t,s)|+|P^t(s)-P(s,s)|\big]ds\les K\e^2.
\ee
Thus, comparing \rf{Veri-Th-proof1} and \rf{Veri-Th-proof2}, we have
\bel{Veri-Th-proof4}
\big|J(t,\bar X(t);\Th^{*,\e}_1(\cd) X^{*,\e},\Th^{*,\e}_2(\cd) X^{*,\e})|_{[t,t+\e]}-J(t,\bar X(t);\bar\Th_1(\cd)\bar X(\cd),\bar\Th_2(\cd)\bar X(\cd))\big|
\les K\e^2\bar X(t)^2.
\ee
It follows from \rf{Veri-Th-proof5} that
\bel{Veri-Th-proof6}\ba{ll}
\ns\ds J(t,\bar X(t); u_1(\cd),\Th^{*,\e}_2(\cd) X^{1,\e} (\cd))|_{[t,t+\e]}
\les J(t,\bar X(t);\bar\Th_1(\cd)\bar X(\cd),\bar\Th_2(\cd)\bar X(\cd))+K\e^2\bar X(t)^2.\ea\ee
Thus, to prove that \rf{def-equilibrium1} holds with $i=1$, it suffices to prove that
\bel{Veri-Th-proof12}
\begin{aligned}
&\big|J(t,\bar X(t); u_1(\cd),\Th^{*,\e}_2(\cd) X^{1,\e} (\cd))|_{[t,t+\e]}-J(t,\bar X(t);u_1(\cd),\bar\Th_2(\cd)X^\e(\cd))\big|\\
&\q\les K\e^2 \dbE_t\[\bar X(t)^2+\int_t^{t+\e} |u_1(s)|^2ds\]^{1\over 2},
\end{aligned}
\ee
for some $K>0$. Note that
\bel{Veri-Th-proof9}
\begin{aligned}
&J(t,\bar X(t); u_1(\cd),\Th^{*,\e}_2(\cd) X^{1,\e} (\cd))|_{[t,t+\e]}=\dbE_t\Big\{P(t,t+\e)X^{1,\e}(t+\e)^2\\
&\q+\int_t^{t+\e}\a(s-t)\(- u_1(s)^2+R^{-1}P^t(s)^2 X^{1,\e}(s)^2\)ds\Big\},
\end{aligned}
\ee
where $ X^{1,\e} (\cd))$ is the unique solution to \rf{Veri-Th-proof7}, and
\bel{Veri-Th-proof10}
\begin{aligned}
&J(t,\bar X(t);\Th_1^\e(\cd)X^\e(\cd),\bar\Th_2(\cd)X^\e(\cd))=\dbE_t\Big\{P(t,t+\e)X^{\e}(t+\e)^2\\
&\q+\int_t^{t+\e}\a(s-t)\( - u_1(s)^2+R^{-1}P(s,s)^2 X^{\e}(s)^2\)ds\Big\},
\end{aligned}
\ee
where $X^\e(\cd)$ is uniquely determined by
\bel{Veri-Th-proof8}\left\{\2n\ba{ll}
\ns\ds dX^{\e}(s)=\big[ u_1(s)-R^{-1} P(s,s)X^{\e}(s)\big]ds+\si X^{\e}(s) dW(s),\qq s\in[t,t+\e],\\
\ns\ds X^\e(t)=\bar X(t).\ea\right. \ee
Comparing \rf{Veri-Th-proof1} with \rf{Veri-Th-proof8}, by \rf{Veri-Th-proof8} and the stability of SDEs,
we have
\bel{}\begin{aligned}
&\dbE_t\[\sup_{s\in[t,t+\e]}|X^\e(s)-X^{1,\e}(s)|^2\]\les K\dbE_t\(\int_t^{t+\e} |P^t(s)-P(s,s)||X^\e(s)|ds\)^2\\
&\q\les K\e^4 \dbE_t\[\bar X(t)^2+\int_t^{t+\e} |u_1(s)|^2ds\].
\end{aligned}\ee
Then
\bel{Veri-Th-proof11}\begin{aligned}
&\dbE_t\[\sup_{s\in[t,t+\e]}|X^\e(s)^2-X^{1,\e}(s)^2|\]\\
&\q\les K\dbE_t\[\sup_{s\in[t,t+\e]}|X^\e(s)+X^{1,\e}(s)|^2\]^{1\over 2}\dbE_t\[\sup_{s\in[t,t+\e]}|X^\e(s)-X^{1,\e}(s)|^2\]^{1\over 2}\\
&\q\les K\e^2 \dbE_t\[\bar X(t)^2+\int_t^{t+\e} |u_1(s)|^2ds\].
\end{aligned}\ee
Comparing \rf{Veri-Th-proof9} and \rf{Veri-Th-proof10}, by the estimates \rf{Veri-Th-proof8} and \rf{Veri-Th-proof11},
we get \rf{Veri-Th-proof12}.
Combining this with \rf{Veri-Th-proof6}, we have
\bel{}
\begin{aligned}
&J(t,\bar X(t);\Th_1^\e(\cd)X^\e(\cd),\bar\Th_2(\cd)X^\e(\cd))\\
&\q\les  J(t,\bar X(t);\bar\Th_1(\cd)\bar X(\cd),\bar\Th_2(\cd)\bar X(\cd))+K\e^2\dbE_t\[\bar X(t)^2+\int_t^{t+\e} |u_1(s)|^2ds\].
\end{aligned}
\ee
It follows that the first inequality in \rf{def-equilibrium1} holds.
Similarly, we can obtain the second inequality in \rf{def-equilibrium1}.

\subsection{Proof of Theorem \ref{thm:convergence}}

Note that for any $0\les t\les t^\prime\les T$ and $s\in[t^\prime, T]$, $\a(s-t)\les\a(s-t^\prime)$.
Combining this with the fact $-\bar\Th_1(s)^2+R\bar\Th_2(s)^2\ges 0$, from \rf{ERIC-Pi} we have
\bel{Con1}
 P^\Pi(t,s)\les P^\Pi(s,s),\qq 0\les t\les s\les T.
\ee
Recall \rf{Ric-N}. By the comparison theorem of ODEs, we have
$$
 P^\Pi(t,s)=P(t_{N-1};s)\les \Xi(s),\qq t_{N-1}\les t\les s\les T,
$$
where $\Xi(\cd)$ is the unique solution to \rf{Wp1}.
Then, by \rf{Con1}, we have
\bel{Con2}
 P^\Pi(t,s)\les \Xi(s),\qq 0\les t\les T,\q t\vee t_{N-1}\les s\les T.
\ee
In particular,
$$P^\Pi(t,t_{N-1})\les\Xi(t_{N-1}),\qq 0\les t\les t_{N-1}.$$
Recall \rf{Ric-(N-1)}. By the comparison theorem  again, we have
 $$
 P^\Pi(t,s)=P(t_{N-2};s) \les \Xi(s),\qq t_{N-2}\les t< t_{N-1}, \q t\les s\les t_{N-1}.
$$
By \rf{Con1}, we have
\bel{Con3}
 P^\Pi(t,s)\les \Xi(s),\qq 0\les t< t_{N-1},\q t\vee t_{N-2}\les s\les t_{N-1}.
\ee
Combining \rf{Con2} and \rf{Con3} together, we have
\bel{Con4}
 P^\Pi(t,s)\les \Xi(s),\qq 0\les t\les T,\q t\vee t_{N-2}\les s\les T.
\ee
By continuing the above, we have
\bel{Con5}
 P^\Pi(t,s)\les \Xi(s),\qq 0\les t\les s\les T.
\ee
Thus, noting $P^\Pi(t,s)\ges 0$, $P^\Pi(\cd,\cd)$ is uniformly bounded.
Then from \rf{ERIC-Pi} and \rf{ERIC}, we have
$$
|P(t,s)-P^\Pi(t,s)|\les K\|\Pi\|+\int_s^T\[|P(t,r)-P^\Pi(t,r)|+|P(r,r)-P^\Pi(r,r)|\]dr,
$$
which implies that
$$
|P(t,s)-P^\Pi(t,s)|\les K\|\Pi\|.
$$
The desired results can be directly obtained.


\clearpage
\newpage
\addcontentsline{toc}{section}{References}
\bibliographystyle{jf}
\openup -.08in
\bibliography{reference}

\ms


\end{document}